\newcommand{\hG}{\hat{G}}
\newcommand{\hV}{\hat{V}}
\newcommand{\hE}{\hat{E}}
\newcommand{\bfbeta}{\boldsymbol{\beta}}
\newcommand{\sumxi}{\sum_{i=1}^3 X_i}
\def\stlv{$st$-lab($v$)~}
\def\stl{$st$-lab}
\def\endproof{\hspace*{\fill}~$\blacksquare$}
\long\def\comment#1{}
\newcommand{\beq}{\begin{equation}}
\newcommand{\eeq}{\end{equation}}
\newcommand{\beqno}{\begin{equation*}}
\newcommand{\eeqno}{\end{equation*}}
\newcommand{\bes}{\begin{split}}
\newcommand{\ees}{\end{split}}
\newcommand{\bdm}{\begin{displaymath}}
\newcommand{\edm}{\end{displaymath}}
\newcommand{\goes}{\rightarrow}
\newtheorem{theorem}{Theorem}
\newtheorem{claim}{Claim}
\newtheorem{lemma}{Lemma}
\newtheorem{definition}{Definition}
\newtheorem{remark}{Remark}
\newcommand{\bd}{\begin{definition}}
\newcommand{\ed}{\end{definition}}
\newcommand{\bv}{\begin{vugraph}}
\newcommand{\ev}{\end{vugraph}}
\newcommand{\bi}{\begin{itemize}}
\newcommand{\ei}{\end{itemize}}
\newcommand{\ben}{\begin{enumerate}}
\newcommand{\een}{\end{enumerate}}
\newcommand{\bean}{\begin{eqnarray*} }
\newcommand{\eean}{\end{eqnarray*} }
\newcommand{\bea}{\begin{eqnarray} }
\newcommand{\eea}{\end{eqnarray} }
\newcommand{\beali}{\begin{align} }
\newcommand{\eeali}{\end{align} }
\newcommand{\ba}{\begin{array} }
\newcommand{\ea}{\end{array} }
\begin{document}

\title{Communicating the sum of sources over a network}
\author{\authorblockN{Aditya Ramamoorthy, {\it {Member, IEEE}}}, and \authorblockN{Michael Langberg, {\it {Member, IEEE}}} \thanks{A. Ramamoorthy is with the Department of Electrical and Computer Engineering, Iowa State University, Ames IA 50011, USA (email: adityar@iastate.edu). \newline M. Langberg is with the Computer Science Division, Open University of Israel, Raanana 43107, Israel (email: mikel@openu.ac.il).

The material in this work was presented in part at the 2008 IEEE International Symposium on Information Theory in Toronto, Canada, at the 2009 IEEE International Symposium on Information Theory in Seoul, South Korea and the 2010 IEEE International Symposium on Information Theory, Austin, TX, USA. This work was supported in part by NSF grants CCF-1018148 and DMS-1120597.}}
\maketitle
\thispagestyle{empty}
\begin{abstract}
We consider the network communication scenario, over directed acyclic networks with unit capacity edges in which a number of sources $s_i$ each holding independent unit-entropy information $X_i$ wish to communicate the sum $\sum{X_i}$ to a set of terminals $t_j$. We show that in the case in which there are only two sources or only two terminals, communication is possible if and only if each source terminal pair $s_i/t_j$ is connected by at least a single path. For the more general communication problem in which there are three sources and three terminals, we prove that a single path connecting the source terminal pairs does not suffice to communicate $\sum{X_i}$.
We then present an efficient encoding scheme which enables the communication of $\sum{X_i}$ for the three sources, three terminals case, given that each source terminal pair is connected by {\em two} edge disjoint paths. 
\end{abstract}
{\keywords network coding, function computation, multicast, distributed source coding.}
\section{Introduction}

We consider the problem of function computation over directed acylic networks in this work.
Under our setting the sources are independent and the network links are error-free, but capacity constrained. However, the topology of the network can be quite complicated, e.g., an arbitrary directed acyclic graph. This serves as an abstraction of current-day computer networks at the higher layers. We investigate the problem of characterizing the {\it network resources} required to communicate the sum (over a finite field) of a certain number of sources over a network to multiple terminals. By network resources, we mean the number of edge disjoint paths between various source terminal pairs in the network. Our work can be considered as using network coding to compute and multicast sums of the messages, as against multicasting the messages themselves.

The problem of multicast has been studied intensively under the paradigm of
network coding. The seminal work of Ahlswede et al. \cite{al}
showed that under network coding the multicast capacity is the
minimum of the maximum flows from the source to each individual
terminal node. The work of Li et al. \cite{lc} showed that linear
network codes are sufficient to achieve the multicast capacity.
The algebraic approach to network coding proposed by Koetter and
M\'{e}dard \cite{rm} provided simpler proofs of these results.

The problem of multicasting sums of sources is an important component in enabling the multicast of correlated sources over a network (using network coding). Network coding for correlated sources was first
examined by Ho et al. \cite{hoMKKESL06}. The work of Ramamoorthy
et al. \cite{adisepDSC} showed that in general separating
distributed source coding and network coding is suboptimal except
in the case of two sources and two terminals. The work of Wu et al. \cite{WuSXK09} presented a practical approach
to multicasting correlated sources over a network. Reference \cite{WuSXK09} also stated
the problem of communicating sums over networks using network coding, and called it the {\it Network Arithmetic} problem.
We elaborate on related work in the upcoming Section~\ref{sec:back}.

In this work, we present (sometimes tight) upper and lower bounds on the network resources required for communicating the sum of sources over a network under certain special cases.
\subsection{Main Contributions}
We consider networks that can be modeled as directed acyclic graphs, with unit capacity edges. Let $G = (V, E)$ represent such a graph. There is a set of source nodes $S \subset V$ that observe independent unit-entropy sources, $X_i, i \in S$, and a set of terminal nodes $T\subset V$, that seek to obtain $\sum_{i \in S} X_i$, where the sum is over a finite field.
Our work makes the following contributions.
\begin{itemize}
\item[i)] {\it Characterization of necessary and sufficient conditions when either $|S| = 2$ or $|T| = 2$.}\\ Suppose that $G$ is such that there are either two sources ($|S| = 2$) and an arbitrary number of terminals or an arbitrary number of sources and two terminals ($|T| = 2$). The following conditions are necessary and sufficient for recovery of $\sum_{i \in S} X_i$ at all terminals in $T$.
    \beqno \text{max-flow}
(s_i - t_j) \geq 1 \text{~for all~} s_i \in S \text{~and~} t_j \in T. \eeqno
Our proofs are constructive, i.e., we provide efficient algorithms for the network code assignment.
\item[ii)] {\it Unit connectivity does not suffice when $|S|$ and $|T|$ are both greater than 2}.\\
We present a network $G$ such that $|S|=|T|=3$ in which the maximum flow between each source terminal pair is at least 1 and (as opposed to that stated above) communicating the sum of sources is not possible.
\item[iii)] {\it Sufficient conditions when $|S| = |T| = 3$.}\\ Suppose that $G$ is such that $|S| = |T| = 3$. The following condition is sufficient for recovery of $\sum_{i \in S} X_i$ at all $t_j \in T$.
    \beqno \text{max-flow}
(s_i - t_j) \geq 2 \text{~for all~} s_i \in S \text{~and~} t_j \in T. \eeqno
Efficient algorithms for network code assignment are presented in this case as well. Note however, that the algorithms may be randomized in some cases, with a probability of success that can be made arbitrarily close to one.
\end{itemize}


This paper is organized as follows. We discuss background and related work in Section \ref{sec:back} and our network coding model in Section \ref{sec:net-coding-model}. The characterization for the case of $|S| = 2, |T| = n$ and $|S| = n, |T| = 2$ is discussed in Section \ref{sec:case_2s_or_2t}. Our counter-example demonstrating that unit-connectivity does not suffice for three sources and three terminals can be found in Section \ref{sec:counter}. Sections \ref{sec:main} and \ref{sec:case_3} discuss the sufficient characterization in the case of three sources and three terminals, and Section \ref{sec:conclusion} presents the conclusions and possibilities for future work.
\section{Background and Related Work}
\label{sec:back}


Prior work of an information theoretic flavor in the area of function computation has mainly considered the case of two correlated sources $X$ and $Y$, with direct links between the sources and the terminal, where the terminal is interested in reconstructing a function $f(X, Y)$. In these works, the topology of the network is very simple, however the structure of the correlation between $X$ and $Y$ may be arbitrary.
In this setting, Korner \& Marton \cite{kornerM79} determine the rate region for encoding the modulo-2 sum of $X$ and $Y$ when they are uniform, correlated binary sources.
The work of Orlitsky \& Roche \cite{orlitskyR01} determines the required rate for sending $X$ to a decoder with side information $Y$ that must reliably compute $f(X,Y)$. The result of \cite{orlitskyR01} was extended to the case when both $X$ and $Y$ need to be encoded (under certain conditions) in \cite{doshiSMJ07}. Yamamoto \cite{yamamoto82} (generalizing the Wyner-Ziv result \cite{wynerZ76}) found the rate-distortion function for sending $X$ to a decoder with side information $Y$, that wants to compute $f(X,Y)$ within a certain distortion level (see also \cite{fengES04} for an extension). Nazer et al. \cite{nazerG07} consider the problem of reliably reconstructing a function over a multiple-access channel (MAC) and finding the capacity of finite-field multiple access networks. In the majority of these works, the sources and the terminal are connected by direct links or by simple networks (such as a MAC). A work closer in spirit to our work is \cite{feiziM09} that considers functional compression over tree-networks.


In this work we consider a problem setting in which the sources are independent and the network links are error-free, but capacity constrained. However, the topology of the network can be quite complicated, such as an arbitrary directed acyclic graph. This is well motivated since it is a good abstraction of current-day computer networks (at the higher layers). We investigate the problem of characterizing the {\it network resources} required to communicate the sum of a certain number of sources over a network to multiple terminals.
Network resources can be measured in various ways. For example, one may specify the maximum flow between the subsets of the source nodes and  subsets of the terminal nodes in the network. In the current work, all of our characterizations are in terms of the maximum flow between various $s_i - t_j$ pairs, where $s_i$ ($t_j$) denotes a source (terminal) node.
Previous work in this area, includes the work of Ahlswede et al. \cite{al}, who introduced the concept of network coding and showed the capacity region for multicast. In multicast, the terminals are interested in reconstructing the actual sources. Numerous follow-up works have extended and improved the results of \cite{al}, in different ways. For example, \cite{lc, rm} considered multicast with linear codes. Ho et al. \cite{hoMKKESL06} proposed random network coding and examined the multicast of correlated sources over a network and showed a tight capacity region for it that can be achieved by using random network codes. Follow-up works \cite{adisepDSC,WuSXK09} investigated practical approaches for the multicast of correlated sources. As shown in \cite{WuSXK09}, the problem of communicating (multicasting) the sum (over a finite field) of sources over a network is a subproblem that can help facilitate practical approaches to the problem of multicasting correlated sources.

In this work we consider function computation under network coding.
Specifically, we present network code assignment algorithms for the problem of multicasting the sum of sources over a network. As one would expect, one needs fewer resources in order to support this. To the best of our knowledge, the first work to examine function computation in this setting is the work of Ramamoorthy \cite{ramamoorthy08}, that considered the problem of multicasting sums of sources, when there are either two sources or two terminals in the network. Subsequently, the work of Langberg and Ramamoorthy \cite{langbergR09} showed that the characterization of \cite{ramamoorthy08} does not hold in the case of three sources and three terminals. Reference \cite{langbergR09}, proposed an alternate characterization in this case. The current paper is a revised and extended version of \cite{ramamoorthy08}, \cite{langbergR09} and \cite{langbergR10} that contains all the proofs and additional observations.

We note, as presented by Rai and Dey in \cite{raiD12}, that the task of finding a network coding scheme in the setting of sum-networks is strongly connected to the problem of finding a network coding solution in the multiple-unicast communication setting. Specifically, for any mutiple unicast network, \cite{raiD12} constructs a sum-network which is solvable if and only if the original multiple unicast network is solvable (the reduction of  \cite{raiD12} increases the number of sources and terminals in the network). Rai and Dey \cite{raiD09} independently found the same counter-example found in our work \cite{langbergR09}; however, their proof only shows that linear codes do not suffice for multicasting sums under the characterization of \cite{ramamoorthy08}. 
The work of Appuswamy et al. \cite{appuswamyFKZ08, appuswamyFKZ09} also considers the problem of computing general functions in the setting of error-free directed acyclic networks. In \cite{appuswamyFKZ08, appuswamyFKZ09}, the emphasis is on considering the rate of the computation, where the rate refers to the maximum number of times a function can be computed per network usage. While their setting is significantly more general, their results are mostly in the context of only single terminal networks.

Finally, the work most related to our result on three source/three
terminal networks is the conference publication of Shenvi and Dey
\cite{SD10} (and its extended version available as \cite{SD10b}) which
proposes (in this case) a combinatorial characterization for sum
computation via network coding. In our work, for three source/three
terminal networks, we present a simple sufficient combinatorial
condition for sum-communication based on flow requirements.  Our result
is not proven to be necessary, and indeed in the subsequent work of
\cite{SD10,SD10b}, our flow condition is refined (and weakened) to
obtain a tight characterization.
The characterization of \cite{SD10,SD10b} implies a significant
improvement in the understanding of 3s/3t sum-networks.
Nevertheless, we believe that our results (obtained independently and
prior to \cite{SD10,SD10b}) are of interest due to the natural and
simple nature of our sufficient condition.

\section{Network coding model}
\label{sec:net-coding-model} Our model and terminology follow those common in the network coding literature, e.g. \cite{rm}.
We represent the network as a directed acyclic graph G = (V,E).
The network contains a set of source nodes $S \subset V$ that are observing independent, discrete
unit-entropy sources and a set of terminals $T \subset V$.
We assume that each edge in the network has unit capacity and can
transmit one symbol from a finite field of size $q$ per unit
time. We are free to choose $q$ large enough. In addition, as we shall see in the later discussion, in some cases we may need to choose $q$ to be an odd prime. If a given edge has
a higher capacity, it can be treated as multiple unit capacity
edges.
A directed edge $e$ between nodes $v_i$ and $v_j$ is
represented as $(v_i \goes v_j)$. Thus $head(e) = v_j$ and
$tail(e) = v_i$. A path between two nodes $v_i$ and $v_j$ is a
sequence of edges $\{ e_1, e_2, \dots, e_k\}$ such that $tail(e_1)
= v_i, head(e_k) = v_j$ and $head(e_i) = tail(e_{i+1}), i = 1,
\dots,  k-1$.

Our counter-example in Section \ref{sec:counter} considers arbitrary network codes. However, our constructive algorithms in Sections \ref{sec:case_2s_or_2t} and \ref{sec:main} shall use linear network codes.
In linear network coding, the signal on an edge $(v_i \goes v_j)$, is a linear combination
of the signals on the incoming edges on $v_i$ and the source
signal at $v_i$ (if $v_i \in S$). In this paper we assume that the
source (terminal) nodes do not have any incoming (outgoing) edges from (to) other nodes. If this is not the
case one can always introduce an artificial source (terminal) connected to
the original source (terminal) node by an edge of sufficiently large capacity that has no incoming (outgoing) edges. We shall only
be concerned with networks that are directed acyclic in which internal nodes have sufficient memory. Such networks can be treated as delay-free networks. Let $Y_{e_i}$ (such
that $tail(e_i) = v_k$ and $head(e_i) = v_l$) denote the signal on
the $i^{th}$ edge in $E$ and let $X_j$ denote the $j^{th}$ source.
Then, we have
\begin{align*}
Y_{e_i} &= \sum_{\{e_j | head(e_j) = v_k\}} f_{j,i} Y_{e_j} \text{~if $v_k \in V \backslash S$}, \text{~and}\\
Y_{e_i} &= \sum_{\{j | X_j \text{~observed at~} v_k\}} a_{j,i} X_j
\text{~ if $v_k \in S$},
\end{align*}
where the coefficients $a_{j,i}$ and $f_{j,i}$ are from $GF(q)$.
Note that since the graph is directed acyclic, it is possible to
express $Y_{e_i}$ for an edge $e_i$ in terms of the sources
$X_j$'s. Suppose that there are $n$ sources $X_1, \dots, X_n$. If
$Y_{e_i} = \sum_{k=1}^n \beta_{e_i, k} X_k$ then we say that the
global coding vector of edge $e_i$ is $\boldsymbol{\beta}_{e_i} =
[ \beta_{e_i, 1} ~\cdots~ \beta_{e_i, n}]$. For brevity we shall mostly use the term coding vector instead of global coding
vector in this paper. We say that a node $v_i$ (or edge $e_i$) is
downstream of another node $v_j$ (or edge $e_j$) if there exists a
path from $v_j$ (or $e_j$) to $v_i$ (or $e_i$).



\section{Networks with either two sources/$n$ terminals or $n$ sources/two terminals}
\label{sec:case_2s_or_2t}

In this section we state and prove the result for (a) networks with two sources and $n$ terminals, and (b) networks with $n$ sources and two terminals. Before embarking on this proof, we overview the concept of greedy encoding that will be used throughout the paper when considering two source networks.
\begin{definition}
\label{def:greedy}
{\it Greedy encoding.}
Consider a graph $G = (V, E)$, with two source nodes $s_1$ and $s_2$ and an edge $e' = (u \rightarrow v) \in E$. Suppose that the coding vector on each edge $e$ entering $u$, has only $0$ or $1$ entries, i.e., $\bfbeta_e = [\beta_{e,1} ~ \beta_{e,2}]$, where $\beta_{e,i} \in \{0,1\}$, for all $i=1,2$. We say that the encoding on edge $e'$ is greedy, if for $i = 1, 2$ we have
\begin{equation}
\beta_{e',i} = \begin{cases}
0 & \text{~if~} \beta_{e,i} = 0, \forall e \text{~entering~} u\\
1 & \text{~otherwise.}
\end{cases}
\end{equation}
A coding vector assignment for $G$, is said to be greedy if the encoding on each edge in $G$ is greedy.
\end{definition}

Consider a vertex $u$ that is downstream of a subset of the source nodes, $B \subseteq \{1, 2\}$. Under greedy coding it can be seen that the outgoing edges of $u$ will carry the sum $\sum_{i \in B}{X_i}$. Namely, if a node only receives either $X_1$ or $X_2$, it just forwards them. Alternatively, if it receives both of them or $X_1 + X_2$, then it just transmits $X_1 + X_2$.

\noindent The first result of this section is the following.
\begin{theorem}
\label{thm:2s_n_t} Consider a directed acylic graph $G = (V, E)$
with unit capacity edges, two source nodes $s_1$ and $s_2$ and $n$
terminal nodes $t_1, \dots , t_n$ such that \beqno \text{max-flow}
(s_i - t_j) \geq 1 \text{~for all~} i=1,2 \text{~and~} j=1, \dots,
n. \eeqno Assume that at each source node $s_i$, there is a unit-rate source
$X_i$, and that the $X_i$'s are independent.
Then, there exists an assignment of coding vectors to all edges
such that each $t_j, j = 1, \dots, n$ can recover $X_1 + X_2$.
\end{theorem}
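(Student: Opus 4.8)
The plan is to apply the greedy encoding of Definition~\ref{def:greedy} to the entire graph and then invoke Lemma~\ref{lemma:greedy}. The first observation is that the hypothesis $\text{max-flow}(s_i - t_j) \geq 1$ for all $i \in \{1,2\}$ and all $j$ forces a convenient structure: since the edges have unit capacity, an integral flow of value one from $s_i$ to $t_j$ decomposes into a single directed path, so there is a directed path from $s_i$ to $t_j$. As this holds for both $i=1$ and $i=2$, every terminal $t_j$ is downstream of \emph{both} source nodes, i.e., in the notation of Lemma~\ref{lemma:greedy} we have $B = \{1,2\}$ for each terminal. This is the only place the max-flow hypothesis is used.

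First I would fix a topological ordering of $V$ (possible since $G$ is a DAG) and assign coding vectors edge by edge in this order. At the sources, the outgoing edges of $s_1$ (resp.\ $s_2$) carry $X_1$ (resp.\ $X_2$), giving coding vectors $[1~0]$ (resp.\ $[0~1]$); recall that in our model sources have no incoming edges, so this serves as the base case. For every other edge $e' = (u \rightarrow v)$, all edges entering $u$ already have assigned coding vectors by the topological order, so the greedy rule of Definition~\ref{def:greedy} is well defined and keeps every coding vector in $\{0,1\}^2$.

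The step that actually requires care, and which I expect to be the main obstacle, is checking that this greedy assignment is \emph{realizable}: that the prescribed $\{0,1\}$-coding vectors can be produced by genuine $GF(2^m)$-linear combinations of the incoming signals. This is exactly where the restriction to two sources is essential. Since only two sources are present, the signal carried on any edge is one of $0, X_1, X_2, X_1+X_2$, corresponding to the four coding vectors $[0~0], [1~0], [0~1], [1~1]$. For an edge $e'$ leaving $u$ whose greedy vector has a $1$ in coordinate $i$, the definition guarantees some edge entering $u$ already carries a $1$ in coordinate $i$. Hence if the target is $[1~0]$ or $[0~1]$ the node simply forwards an incoming edge carrying $X_1$ or $X_2$; if the target is $[1~1]$ the node either forwards an incoming edge already carrying $X_1 + X_2$, or else adds an incoming $X_1$-edge to an incoming $X_2$-edge. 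In every case a valid local encoding exists, which is precisely the content of the Remark following Lemma~\ref{lemma:greedy}.

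Finally, with a valid greedy code in hand, I would apply Lemma~\ref{lemma:greedy} with $B = \{1,2\}$ to each terminal $t_j$: since $t_j$ is downstream of both sources, it recovers $\sum_{i \in \{1,2\}} X_i = X_1 + X_2$, as required. The rest is bookkeeping organized by the topological order, so the genuinely substantive content is the realizability argument above, which in turn hinges on the two-source structure.
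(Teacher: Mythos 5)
Your proof is correct and takes essentially the same route as the paper: the max-flow hypothesis gives a directed path from each source to each terminal, so every $t_j$ is downstream of both sources, and greedy encoding (Definition~\ref{def:greedy}) together with Lemma~\ref{lemma:greedy} applied with $B=\{1,2\}$ yields $X_1+X_2$ at every terminal. The realizability argument you single out as the substantive step is exactly what the paper handles informally in the Remark following Lemma~\ref{lemma:greedy}, so your write-up is just a more detailed account of the same proof.
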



\noindent {\it Proof of Theorem \ref{thm:2s_n_t}}.
Consider any terminal node $t_j$.
As we assume that $\text{max-flow} (s_i - t_j) \geq 1 \text{~for all~} i=1,2$, it holds that $t_j$ is downstream of both $s_1$ and $s_2$.
Thus, (using greedy encoding) by the observation above, $t_j$ can recover $X_1+X_2$.
\endproof

Note that if any of the conditions in the statement of Theorem \ref{thm:2s_n_t} are
violated then some terminal will be unable to
compute $X_1+X_2$. For example, if max-flow$(s_1 - t_j) < 1$
then any decoded signal $Y$ at $t_j$ will have $H(Y|X_2)<1$ (as $Y$ is solely a function of $X_1$ and $X_2$).
We conclude that $Y$ cannot be $X_1+X_2$.

Next, consider the class of networks with $n$ sources and two terminals. The original proof of this result (obtained in \cite{ramamoorthy08}) was obtained via a series of graph-theoretic operations on the network. However, subsequently it was shown in \cite{raiD09} that this result follows in a simpler manner by using the idea of network reversibility. We state the result below.

\begin{theorem}
\label{thm:n_s_2_t} Consider a directed acylic graph $G = (V, E)$
with unit capacity edges, $n$ source nodes $s_1, s_2,
\dots, s_n$ and two terminal nodes $t_1$ and $t_2$ such that
\beqno \text{max-flow} (s_i - t_j) \geq 1 \text{~for all~} i = 1,
\dots, n \text{~and~}j= 1,2. \eeqno
Assume that the source nodes observe independent unit-entropy sources $X_i, i = 1, \dots, n$.
Then, there exists an
assignment of coding vectors such that each terminal can recover
the
sum of the sources $\sum_{i=1}^n X_i$.
\end{theorem}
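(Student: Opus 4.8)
The plan is to follow the two-step strategy announced just before the theorem: first reduce $G$ to a subgraph $G'$ in which there is \emph{exactly one} directed path from each source $s_i$ to each terminal $t_j$, and then exhibit a simple linear code on $G'$. The reduction is the heart of the argument; once it is in place, the code is essentially forced.

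\textbf{Step 1 (structural reduction).} Since $\text{max-flow}(s_i - t_1) \geq 1$ for every $i$, each source can reach $t_1$, so the set of nodes that can reach $t_1$ contains all the sources. On this set I would build an in-arborescence rooted at $t_1$ by selecting, for each non-root node, a single outgoing edge that stays within the ``can-reach-$t_1$'' set; call the resulting edge set $E_1$. Within $E_1$ every node has out-degree at most one, so each source has a unique path to $t_1$. Construct $E_2$ analogously for $t_2$ and set $G' = (V, E_1 \cup E_2)$, discarding all other edges (edges of $G$ outside $G'$ simply carry the zero vector, so any code for $G'$ lifts to $G$). The one thing that must still be verified is that taking the union of the two arborescences does not create a \emph{second} path from some source to some terminal; I return to this below.

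\textbf{Step 2 (coding on $G'$).} Assume $G'$ has the single-path property. For an edge $e=(u\to w)$ let $U(e)$ denote the set of sources lying upstream of $e$ (i.e., that can reach $u$), and assign to $e$ the coding vector that is the indicator of $U(e)$, so that $e$ carries $\sum_{i \in U(e)} X_i$. This is a legitimate linear network code: at any node $u$, if a source $s_i$ were upstream of two distinct incoming edges of $u$, it would have two distinct paths to $u$, and hence, extending along the unique path from $u$ to a terminal, two distinct paths to that terminal, contradicting the single-path property. Thus the source sets carried by the incoming edges of $u$ are pairwise disjoint, and $\sum_{i\in U(e)}X_i$ is exactly the sum of the incoming signals --- no source is ever double-counted, so the obstruction described in the remark above never arises. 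Finally, at a terminal $t_j$ the incoming edges carry disjoint source sets whose union is all of $\{1,\dots,n\}$ (every source reaches $t_j$), so $t_j$ recovers $\sum_{i=1}^{n} X_i$ simply by adding the symbols on its incoming edges.

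\textbf{Main obstacle.} The delicate point is Step 1: the naive union $E_1\cup E_2$ can violate the single-path property, since a node may have one out-edge toward $t_1$ and another toward $t_2$, and following the $t_2$-edge and then the $t_1$-arborescence can give a source a second route to $t_1$. The work therefore lies in choosing the two arborescences consistently (e.g., with respect to a common topological order) and pruning the redundant edges so created, while verifying that no pruning step ever severs a required $s_i$--$t_j$ connection. Because $G$ is a DAG and the terminals are sinks, a topological-order argument should let one remove such redundancies one node at a time; establishing this carefully is the crux of the proof.
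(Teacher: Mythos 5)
Your Step 2 is sound, and it is essentially the paper's own coding argument: once a subgraph with exactly one $s_i$--$t_j$ path for every pair is in hand, having each node forward the sum of its inputs (equivalently, your indicator-of-upstream-sources coding vectors, which are well defined by the disjointness argument you give) delivers $\sum_{i=1}^n X_i$ to both terminals. The genuine gap is Step 1, and it is not a detail one can defer: the existence of such a single-path subgraph is precisely Lemma~\ref{one-path-lemma} of the paper, whose proof occupies the entire Appendix. By labeling it the ``main obstacle'' and sketching only a hope that ``a topological-order argument should let one remove such redundancies,'' you have restated the key lemma rather than proved it, so the proposal establishes the theorem only modulo the paper's central technical contribution.

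Moreover, the specific remedy you suggest --- fix two arborescences $E_1,E_2$ (even with respect to a common topological order) and then prune the mixed $E_2$-then-$E_1$ paths --- runs into exactly the difficulty the paper's induction is built to overcome. Every pruning candidate lies on the unique $E_1$-path to $t_1$, or the unique $E_2$-path to $t_2$, of possibly many sources at once, since arborescence edges are shared; hence a deletion that destroys a duplicate $s_i$--$t_1$ route can sever a required $s_k$--$t_j$ connection, and you give no argument that this can always be avoided. The paper does not prune a fixed union at all: it argues by induction on $n$, maintaining a \emph{minimal} subgraph $G^*_{n-1}$ with the exactly-one-path property (blue edges), adding paths from the new source $s_n$ (red edges), locating the first red--blue intersection nodes $u_1,u_2$, and then, in the hard case, excising the entire subgraph of sources feeding $u_1$, replacing it by an artificial source attached at $u_1$, and re-invoking the induction hypothesis on the remainder --- that is, the paths of previously-handled sources are globally rerouted, not merely trimmed. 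Some form of this global restructuring (and the accompanying minimality bookkeeping that guarantees no required connection is lost) is what your sketch is missing.
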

{\it Proof.} Given a directed acyclic network $G = (V, E)$, its reverse network $\tilde{G}$ is defined as the network that has the same set of vertices $V$, but the orientation of each edge is reversed. Moreover the sources in $G$ become terminals in $\tilde{G}$ and the terminals in $G$ become the sources in $\tilde{G}$. Reference \cite{raiD09} shows if the sum of sources in $G$ can be multicast to all the terminals (in $G$), the sum of sources in $\tilde{G}$ can also be multicast to  all the terminals (in $\tilde{G}$). Our proof now follows from using reversibility and Theorem \ref{thm:2s_n_t}.
\endproof 

%

\section{Insufficiency of unit-connectivity for 3-source/3-terminal networks}
\label{sec:counter}
In the discussion below we show an instance of a network with three sources and three terminals, with at least one path connecting each source terminal pair, in which the sum of sources cannot (under any network code) be transmitted (with zero error) to all three terminals.
Consider the network shown in Figure \ref{fig:3s-3t-counter-eg}, with three source nodes and three terminal nodes such that the source nodes observe unit entropy sources $X_1, X_2$ and $X_3$ that are also independent. All edges are unit capacity. 
As showed in Figure \ref{fig:3s-3t-counter-eg} the incoming edges into terminal $t_3$ contain the values $f(X_1, X_2)$ and $f'(X_2, X_3)$ where $f$ and $f'$ are some functions of the sources.

Suppose that $X_3 = 0$. This implies that $t_1$ should be able to recover $X_1 + X_2$ (that has entropy 1) from just $f(X_1, X_2)$. Moreover note that each edge is unit capacity. Therefore, the entropy of $f(X_1, X_2)$ also has to be 1, i.e., there exists a one-to-one mapping between the set of values that $f(X_1, X_2)$ takes and the values of $X_1 + X_2$. In a similar manner we can conclude that there exists a one-to-one mapping between the set of values that $f'(X_2, X_3)$ takes and the values of $X_2 + X_3$. At terminal $t_3$, there needs to exist some function $h(f(X_1, X_2), f'(X_2, X_3)) = \sumxi$. By the previous observations, this also implies the existence of a function $h'(X_1 + X_2, X_2 + X_3)$ that equals $\sumxi$. However, this is a contradiction.
Consider the following sets of inputs: $X_{1} = a, X_2 = 0, X_3 = c$ and $X_1' = a - b, X_2' = b, X_3' = c - b$. In both cases the inputs to the function $h'(\cdot, \cdot)$ are the same. However $\sum_{i=1}^3 X_i = a + c$, while $\sum_{i=1}^3 X_i' = a - b + c$, that are in general different. Therefore such a function $h'(\cdot, \cdot)$ cannot exist.

Note that we have presented the proof in the context of scalar nonlinear network codes. However, even if we consider vector sources along with vector network codes, the same idea of the proof can be used.
\begin{figure}[t]
\centering
\includegraphics[width=60mm, clip=true]{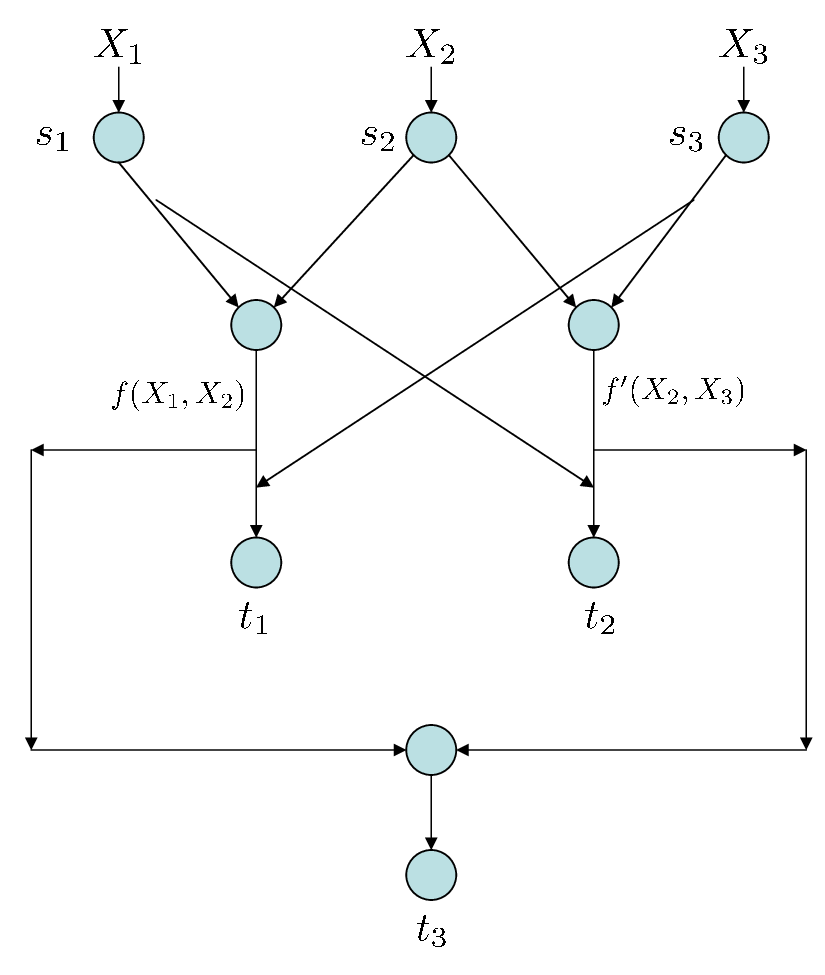} \vspace{-3mm}
\caption{\label{fig:3s-3t-counter-eg}Example of a network with three sources and three terminals, such that there exists at least one path between each source and each terminal. However all the terminals cannot compute $\sumxi$.}
\end{figure}


\section{Case of three sources and three terminals}
\label{sec:main}
It is evident from the counter-example discussed in Section \ref{sec:counter}, that the characterization of the required resources for networks with three sources and three terminals is different from the cases discussed in Section \ref{sec:case_2s_or_2t}. In this section, we show that as long as each source is connected by {\em two} edge disjoint paths to each terminal, the terminals can recover the sum. We present efficient linear encoding schemes, i.e., linear codes that can be found in time polynomial in the number of nodes, that allow communication in this case. The main result of this section can be summarized as follows.
\begin{theorem}
\label{the:main}
Let $G=(V,E)$ be a directed acyclic network with three sources $s_1,s_2,s_3$ and three terminals $t_1,t_2,t_3$.
Let $X_i$ be the (unit entropy) information present at source $s_i$.
If there exist two edge disjoint paths between each source/terminal pair, then there exists a network coding scheme in which the sum $X_1+X_2+X_3$ is obtained at each terminal $t_j$.
Moreover, such a network code can be found efficiently.
\end{theorem}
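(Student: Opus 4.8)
The plan is to mirror the two-stage strategy that succeeded for Theorem~\ref{thm:n_s_2_t}: first reduce $G$ to a structurally simple subgraph on which the two edge-disjoint paths per source/terminal pair are laid bare, and then assign coding vectors using that structure. By Menger's theorem the hypothesis max-flow$(s_i-t_j)\ge 2$ is equivalent to the existence of two edge-disjoint paths from each $s_i$ to each $t_j$. As in Lemma~\ref{one-path-lemma}, I would first pass to a subgraph $G^{*}$ that is minimal with respect to the requirement ``two edge-disjoint $s_i\goes t_j$ paths for every $i,j$'': deleting any edge of $G^{*}$ destroys the $2$-connectivity of some pair. Minimality is the lever that forces a bounded local structure --- in particular it should cap in/out-degrees and rule out edges carrying no useful information --- and it is exactly this rigidity that the structural decomposition of contribution (iv) is meant to exploit.

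Next I would build the labeling scheme. To each node $v$ I attach the set $B(v)\subseteq\{1,2,3\}$ of sources that can reach $v$ in $G^{*}$, and I would refine this into edge classes that also record, for each terminal, which of the two disjoint paths an edge serves. The aim is a \emph{class decomposition}: a partition of the edges of $G^{*}$ into a constant number of types such that the way source flows merge is controlled, and such that in a neighborhood of each terminal the configuration is one of a short list of canonical patterns. The decisive point this must establish is that the obstruction of Section~\ref{sec:counter} --- a terminal fed only $f(X_1,X_2)$ and $f'(X_2,X_3)$ --- cannot survive once every pair is $2$-connected, because the second disjoint path always supplies an independent combination that breaks the degeneracy.

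With the decomposition in hand I would assign global coding vectors in $GF(2^m)^3$ by processing the nodes of $G^{*}$ in topological order, maintaining the invariant that the vectors stay inside a structured family dictated by the classes, so that partial sums accumulate predictably rather than colliding. The goal at terminal $t_j$ is to guarantee that its received coding vectors span a subspace $W_j$ with $[1,1,1]\in W_j$, after which $t_j$ recovers $\sumxi$. It is worth noting at this stage why a designed code, not mere genericity, is required: when the min-cut from $\{s_1,s_2,s_3\}$ to $t_j$ is only $2$ (which the hypotheses permit, since paths from distinct sources may share edges), $W_j$ is a $2$-dimensional subspace of a $3$-dimensional space, and a random such $W_j$ contains the fixed vector $[1,1,1]$ only with probability $O(1/2^m)\to 0$. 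The two-path redundancy must therefore be spent deliberately to steer each $W_j$ through $[1,1,1]$, and one chooses the free coefficients over a sufficiently large field so that all three containment conditions hold at once, e.g.\ by a Schwartz--Zippel argument applied to the product of the relevant determinantal conditions.

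The main obstacle is precisely this simultaneity: a coefficient choice that steers $W_1$ through $[1,1,1]$ may pull $W_3$ away from it, which is the algebraic shadow of the counterexample. Overcoming it is where the structural decomposition earns its keep --- I expect it to decouple the three terminals' constraints by exhibiting, for each terminal, an essentially ``private'' disjoint path whose coefficient can be tuned without disturbing the other terminals' received spaces. Establishing that such an independent degree of freedom exists for every terminal simultaneously, and that the induced system of non-singularity conditions is jointly satisfiable over $GF(2^m)$ for large $m$, is the crux; the topological-order assignment and the final field-size bound would then be routine. Efficiency follows because the minimality reduction, the path selection (max-flow computations), and the coefficient search are all polynomial-time.
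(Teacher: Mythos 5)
Your plan has the right outer shape (structural reduction, then a decomposition, then algebra over a large field with a Schwartz--Zippel closing step), and your observation that pure random coding must fail because a generic $2$-dimensional received space $W_j$ misses $[1,1,1]$ is correct and important. But there are two genuine gaps. First, your reduction step rests on the unsupported claim that minimality with respect to ``two edge-disjoint paths per pair'' caps in/out-degrees and yields bounded local structure; nothing forces this, and the paper instead \emph{constructs} the degree-$3$ property explicitly, replacing each high-degree internal node by a tree/gadget subgraph (Section~\ref{sec:map}, following \cite{LSB06}) and proving the code can be pulled back to $G$. Second, and decisively, the mechanism you invoke to resolve the simultaneity of the three conditions $[1,1,1]\in W_j$ --- an ``essentially private'' disjoint path per terminal whose coefficient can be tuned without affecting the other terminals --- is never constructed, and in the hard configurations it does not exist: the difficult case is precisely when a terminal's every useful input arrives through $(2,2)$ nodes shared with other terminals (this is the algebraic shadow of the counterexample, as you say), so no single coefficient is seen by only one terminal. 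Naming the crux is not the same as supplying the idea that overcomes it.

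What actually makes the argument go through in the paper is a case analysis you do not have. Nodes are labeled $(c_s(v),c_t(v))$ by how many sources reach $v$ and how many terminals $v$ reaches; if a $(3,3)$, $(2,3)$, or $(3,2)$ node exists, explicit tree/greedy encodings finish the job using only \emph{one} path per pair (partly by reduction to Theorem~\ref{thm:2s_n_t}). Otherwise every relevant internal node is a $(2,2)$ node, which gets a ``color'' (its source/terminal $4$-tuple), and the key decoupling lemma is that no edge can join nodes of different colors (Lemma~\ref{claim:sep_color_subgraphs}), so codes on distinct color subgraphs can be chosen independently --- this, not private paths, is what breaks the interference between terminals. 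With $0$, $1$, or $\geq 3$ colors, fixed deterministic encodings (greedy, or the $\{2X_1+X_2,\ X_2+2X_3,\ X_1-X_3\}$ table) suffice, still with one path per pair; the two-edge-disjoint-path hypothesis is spent only in the exactly-two-color case (Lemma~\ref{lem:final_case}), where random coding \emph{inside a single color class} plus a determinant/Schwartz--Zippel argument (Claims~\ref{claim:random1} and~\ref{claim:random2}) guarantees each problematic terminal two independent combinations of the two sources of that color --- a target for which genericity \emph{does} work, unlike your global target $[1,1,1]\in W_j$. Without the color-separation lemma or some equivalent, your topological-order invariant and joint-satisfiability step have no foundation.
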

\begin{remark}
Our example in Section \ref{sec:counter}, shows that a single path between each $s_i - t_j$ pair does not suffice. At the other extreme, if there are three edge-disjoint paths between each $s_i - t_j$ pair, then one can actually multicast $X_1, X_2$ and $X_3$ to each terminal \cite{rm}. Our results show that {\it two} edge disjoint paths between each source terminal pair are sufficient for multicasting sums. 
\end{remark}

We start by giving an overview of our proof. Our approach for determining the desired network code has three steps.
In the first step, we turn our graph $G$ into a graph $\hG=(\hV,\hE)$ in which each internal node $v \in \hV$ is of total degree (in-degree + out-degree) at most three.
We refer to such graphs as {\em structured} graphs.
Our efficient reduction follows that appearing in \cite{LSB06}, and has the following properties:
(a) $\hG$ is acyclic.
(b) For every source (terminal) in $G$ there is a corresponding source (terminal) in $\hG$.
(c) For any two edge disjoint paths $P_1$ and $P_2$ connecting a source terminal pair in $G$, there exist two {\em vertex} disjoint paths in $\hG$ connecting the corresponding source terminal pair. Here and throughout we say two paths between a source terminal pair are vertex disjoint even though they share their first and last vertices (i.e., the source and terminal at hand).
(d) Any feasible network coding solution in $\hG$ can be efficiently turned into a feasible network coding solution in $G$. We note that the same reduction has facilitated a study of three-source, three-terminal multiple unicast networks \cite{huangR11, huangR12}.

It is not hard to verify that proving Theorem~\ref{the:main} on structured graphs implies a proof for general graphs $G$ as well.
Indeed, given a network $G$ satisfying the requirements of Theorem~\ref{the:main} construct the corresponding network $\hG$.
By the properties above, $\hG$ also satisfies the requirements of Theorem~\ref{the:main}.
Assuming Theorem~\ref{the:main} is proven for structured graphs $\hG$, we conclude the existence of a feasible network code in $\hG$.
Finally, this network code can be converted (by property (d) above) into a feasible network code for $G$ as desired.
The mapping between $G$ and $\hG$ is presented in detail in \cite{LSB06}.
For notational reasons, from this point on in the discussion we will assume that our input graph $G$ is structured --- which is now clear to be w.l.o.g.

In the second step of our proof, we give edges and vertices in the graph $G$ certain labels depending on the combinatorial structure of $G$.
This step can be viewed as a decomposition of the graph $G$ (both the vertex set and the edge set) into certain {\em class} sets 
that will play a major role in our analysis.
The decomposition of $G$ is given in detail in Section~\ref{sec:dec}.

Finally, in the third and final step of our proof, using the labeling above we perform a case analysis for the proof of Theorem~\ref{the:main}.
Namely, based on the terminology set in Section~\ref{sec:dec}, we identify several scenarios, and prove Theorem~\ref{the:main} assuming they hold. As the different scenarios we consider will cover all possible ones, we will conclude our proof. Our detailed case analysis is given in Section~\ref{sec:case} and Section~\ref{sec:case_3}.


\subsection{Graph decomposition}
\label{sec:dec}

As justified in our previous discussions, we assume throughout that 
any internal vertex in $V$ (namely, any vertex which is neither a source or a sink) has total degree at most $3$.
Moreover, we assume $G$ satisfies the connectivity requirements specified in Theorem~\ref{the:main}.

We start by labeling the vertices of $G$.
A vertex $v \in V$ is labeled by a pair $(c_s,c_t)$ specifying how many sources (terminals) it is {\em connected} to.
Specifically, $c_s(v)$ equals the number of sources $s_i$ for which there exists a path connecting $s_i$ and $v$ in $G$.
Similarly, $c_t(v)$ equals the number of terminals $t_j$ for which there exists a path connecting $v$ and $t_j$ in $G$.
For example, any source is labeled by the pair $(1,3)$, and any terminal by the pair $(3,1)$.
An internal vertex $v$ labeled $(\cdot,1)$ is connected to a single terminal only. This implies that any information leaving $v$ will reach at most a single terminal.

\subsection{Case analysis}
\label{sec:case}
Our proof methodology involves a classification of networks based on the node labeling procedure presented above.
For each class of networks we shall argue that each terminal can compute the sum of the sources $(X_1 + X_2 + X_3)$. Our proof shall be constructive, i.e., it can be interpreted as an algorithm for finding the network code that allows each terminal to recover $(X_1 + X_2 + X_3)$.
\subsubsection{Case 0}
There exists a node of type $(3, 3)$ in $G$.
Suppose node $v$ is of type $(3, 3)$. This implies that there exist $path(s_i - v)$, for $i = 1, \dots, 3$ and $path(v - t_j)$, for  $j = 1, \dots, 3$. Consider the subgraph induced by these paths and color each edge on $\cup_{i = 1}^3 path(s_i - v)$ red and each edge on $\cup_{j = 1}^3 path(v - t_j)$ blue. We claim that as $G$ is acyclic, at the end of this procedure each edge gets only one color. To see this suppose that a red edge is also colored blue.
This implies that it lies on a path from a source to $v$ and a path from $v$ to a terminal, i.e. its existence implies a directed cycle in the graph.

Now, we can find an inverted tree that is a subset of the red edges directed into $v$ and similarly a tree rooted at $v$ with $t_1, t_2$ and $t_3$ as leaves using the blue edges. Finally, we can compute $(X_1 + X_2 + X_3)$ at $v$ over the red tree and multicast it to $t_1, t_2$ and $t_3$ over the blue subgraph. More specifically, one may use an encoding scheme in which internal nodes of the red tree receiving $Y_1$ and $Y_2$ send on their outgoing edge the sum $Y_1+Y_2$.

\subsubsection{Case 1}
There exists a node of type $(2, 3)$ in $G$.
Note that it is sufficient to consider the case when there does not exist a node of type $(3,3)$ in $G$. We shall show that this case is equivalent to a two sources, three terminals problem.

W.l.o.g. we suppose that there exists a $(2, 3)$ node $v$ that is connected to $s_2$ and $s_3$. We color the edges on $path(s_2 - v)$ and $path(s_3 - v)$ blue. Next, consider the set of paths $\cup_{i=1}^3 path(s_1 - t_i)$. We claim that these paths do not have any intersection with the blue subgraph. This is because the existence of such an intersection would imply that there exists a path between $s_1$ and $v$ which in turn implies that $v$ would be a $(3, 3)$ node.
We can now compute $(X_2 + X_3)$ at $v$ by finding a tree consisting of blue edges that are directed into $v$. Suppose that the blue edges are removed from $G$ to obtain a graph $G'$. Since $G$ is directed acyclic, we have that there still exists a path from $v$ to each terminal after the removal. Now, note that (a) $G'$ is a graph such that there exists at least one path from $s_1$ to each terminal and at least one path from $v$ to each terminal, and (b) $v$ can be considered as a source that contains $(X_2 + X_3)$. 
Now, $G'$ satisfies the condition given in Theorem~\ref{thm:2s_n_t} (which addresses the two sources version of the problem at hand), therefore we are done.
\subsubsection{Case 2}
There exists a node of type $(3, 2)$ in $G$.
As before it suffices to consider the case when there do not exist any $(3, 3)$ or $(2, 3)$ nodes in the graph. Suppose that there exists a (3,2) node $v$ and w.l.o.g. assume that it is connected to $t_1$ and $t_2$. We consider the subgraph $G'$ induced by the union of the following sets of paths
\begin{enumerate}
\item $\cup_{i=1}^3 path(s_i - v)$, \item $\cup_{i=1}^2 path(v - t_i)$, and \item $\cup_{i=1}^3 path(s_i - t_3)$.
\end{enumerate}
Note that as argued previously, a subset of edges of $\cup_{i=1}^3 path(s_i - v)$ can be found so that they form a tree directed into $v$. For the purposes of this proof, we will assume that this has already been done, i.e., the graph $\cup_{i=1}^3 path(s_i - v)$ is a tree directed into $v$.

The basic idea of the proof is to show that the paths from the sources to terminal $t_3$, i.e., $\cup_{i=1}^3 path(s_i - t_3)$ are such that their overlap with the other paths is very limited. Thus, the entire graph can be decomposed into two parts, one over which the sum is transmitted to $t_1$ and $t_2$ and another over which the sum is transmitted to $t_3$. 

Towards this end, note that $path(s_1 - t_3)$ cannot have an intersection with either $path(s_2 - v)$ or $path(s_3 - v)$, for if such an intersection occurred at a node $v'$, then $v'$ would be a node of type $(2,3)$ contradicting our assumption. Likewise, it can be noted that (a) $path(s_2 - t_3)$ cannot have an intersection with either $path(s_1 - v)$ or $path(s_3 - v)$, and (b) $path(s_3 - t_3)$ cannot have an intersection with either $path(s_1 - v)$ or $path(s_2 - v)$. In a similar manner, we observe that the paths $path(s_1 - t_3), path(s_2 - t_3)$ and $path(s_3 - t_3)$ cannot have an intersection with either $path(v - t_1)$ or $path(v - t_2)$ as this would imply that $v$ is a $(3,3)$ node contradicting our assumption.

%
%
%
%

We now discuss the coding solution on $G'$. Let $v_i$ be the node closest to $v$ that belongs to both $path(s_i-v)$ and $path(s_i-t_3)$ (notice that $v_i$ may equal $s_i$ but it cannot equal $v$).
On the paths $path(s_i-v_i)$ send $X_i$. On the paths $path(v_i-v)$ send information that will allow $v$ to obtain $X_1+X_2+X_3$. This can be easily done, as these (latter) paths form a tree into $v$. Namely, one may use an encoding scheme in which internal nodes receiving $Y_1$ and $Y_2$ send on their outgoing edge the sum $Y_1+Y_2$. By the discussion above (and the fact that $G'$ is acyclic) it holds that the information flowing on edges $e$ in $path(v_i-t_3), i = 1, \dots, 3$ has not been specified by the encoding defined above. Thus, one may send information on the paths $path(v_i-t_3)$ that will allow $t_3$ to obtain $X_1+X_2+X_3$. Here we assume the paths $path(v_i-t_3)$ form a tree into $t_3$, if this is not the case we may find a subset of edges in these paths with this property. Once more, by the discussion above (and the fact that $G'$ is acyclic) it holds that the information flowing on edges $e$  in the paths $path(v-t_1)$ and $path(v-t_2)$ has not been specified (by the encodings above). On these edges we may transmit the sum $X_1+X_2+X_3$ present at $v$.



\subsubsection{Case 3}
There do not exist $(3,3), (2, 3)$ and $(3,2)$ nodes in $G$.
Note that thus far we have not utilized the fact that there exist two edge-disjoint paths from each source to each terminal in $G$. In previous cases, the problem structure that has emerged due to the node labeling, allowed us to communicate $(X_1 + X_2 + X_3)$ by using just one path between each $s_i - t_j$ pair. However, for the case at hand we will indeed need to use the fact that there exist two paths between each $s_i - t_j$ pair.  As we will see, this significantly complicates the analysis, and we present it in the upcoming section.

The following definitions are required for this case.
An edge $e=(u,v)$ for which $v$ is labeled $(\cdot,1)$ will be referred to as a {\em terminal} edge.
Namely, any information flowing on $e$ can reach at most a single terminal.
If this terminal is $t_j$ then we will say that $e$ is a $t_j$-edge.
Clearly, the set of $t_1$-edges is disjoint from the set of $t_2$-edges (and similarly for any pair of terminals).
An edge which is not a terminal edge will be referred to as a {\em remaining} edge or an $r$-edge for short.


Note that there exists an ordering of edges in $E$ in which any $r$-edge comes before any terminal edge, and in addition there is no path from a terminal edge to an $r$-edge. This is obtained by an appropriate topological order in $G$.
Moreover, for any terminal $t_j$, the set of $t_j$-edges form a connected subgraph of $G$ with $t_j$ as its sink.
To see this note that by definition each $t_j$-edge $e$ is connected to $t_j$ and all the edges on a path between $e$ and $t_j$ are $t_j$-edges.
Finally, the head of an $r$-edge is either of type $(\cdot, 2)$ or $(\cdot, 3)$ (as otherwise it would be a terminal edge).

For each terminal $t_j$ we define a set of vertices referred to as the leaf set $L_j$ of $t_j$. 
\begin{definition}  {\it Leaf set of a terminal.}
The leaf set of terminal $t_j$ is the set of nodes of in-degree 0 in the subgraph consisting of $t_j$-edges.
\end{definition}
We note that a source node can be a leaf node for a given terminal.

\section{Analysis of Case 3}
\label{sec:case_3}
Note that the node labeling procedure presented above assigns a label $(c_s(v), c_t(v))$ to a node $v$ where $c_s(v)$ ($c_t(v)$) is the number of sources (terminals) that $v$ is connected to. This labeling ignores the actual identity of the sources and terminals that have connections to $v$. It turns out that we need to use an additional, somewhat finer notion of node connectivity when we want to analyze case 3. We emphasize that throughout this section, we still operate under the assumption the graph is structured (cf. reduction discussed in Section \ref{sec:main}).


Towards this end, for case 3 (i.e., in a graph $G$ without $(3,3), (2,3)$ and $(3,2)$ nodes) we introduce the notion of the source-terminal label (or $st$-label for short) of a node.
For each $(2,2)$ node in $G$, the $st$-label of the node is defined as the $4$-tuple of sources and terminals it is connected to, e.g., if $v$ is connected to sources $s_1$ and $s_2$ and terminals $t_1$ and $t_2$, then its $st$-label, denoted \stlv is $(s_1, s_2, t_1, t_2)$. We shall also say that the source label of $v$ is $(s_1, s_2)$ and the terminal label of $v$ is $(t_1, t_2)$. The following claim is immediate.



\begin{claim}
\label{claim:leaf_existence}
If there is a $(2,2)$ node $v$ in $G$ of $st$-label, \stlv, then each terminal in the terminal label of $v$ has at least one leaf with $st$-label \stlv. For example, if \stlv = $(s_1, s_2, t_1, t_2)$, then both $t_1$ and $t_2$ have leaves with $st$-label $(s_1, s_2, t_1, t_2)$.
\end{claim}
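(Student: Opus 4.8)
The plan is to prove Claim~\ref{claim:leaf_existence} by showing that the $(2,2)$ node $v$ with color \colv\ $= (s_1, s_2, t_1, t_2)$ itself sits on a path leading to each of its terminals, and that the leaf associated to such a path inherits the color of $v$. First I would fix a terminal in the terminal color of $v$, say $t_1$, and use the fact that $c_t(v) = 2$ with $t_1$ in the terminal color to conclude there is a path $Q = (v, \dots, t_1)$ from $v$ to $t_1$. Symmetrically, since $c_s(v) = 2$ with $s_1, s_2$ in the source color, there are paths $R_1 = (s_1, \dots, v)$ and $R_2 = (s_2, \dots, v)$ into $v$. Concatenating $R_1$ with $Q$ gives a genuine path $P = (s_1, \dots, v, \dots, t_1)$ from a source to $t_1$ (it is a path and not a walk because $G$ is acyclic, so the incoming and outgoing portions cannot share an internal vertex). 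This $P$ has an associated leaf $v_P \in L_1$ by the definition of the leaf set.

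Next I would identify $v_P$ and compute its color. By the leaf-set definition, $v_P$ is the vertex where $P$ first enters the set of $t_1$-edges; equivalently, $v_P$ is the last vertex on $P$ whose label is $(\cdot, 2)$ or $(\cdot, 3)$, with every subsequent vertex on $P$ labeled $(\cdot, 1)$. The key observation is to locate $v$ relative to $v_P$ along $P$. Since $v$ is a $(2,2)$ node, $v$ is connected to two terminals, so $v$ is certainly \emph{not} on a $t_1$-edge (such edges have heads of type $(\cdot,1)$); hence $v$ lies weakly before $v_P$ on $P$, and in fact I claim $v = v_P$. The hard part will be pinning down exactly that $v_P$ equals $v$ and not some earlier vertex: I must argue that the subpath of $P$ strictly between $v$ and $t_1$ consists only of $t_1$-edges. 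This is where the absence of $(3,3), (2,3), (3,2)$ nodes (the Case~3 hypothesis) is essential: any internal vertex $w$ on the segment $(v, \dots, t_1)$ downstream of $v$ is connected to $t_1$; if $w$ were connected to a second terminal it would, together with its connectivity to the two sources feeding $v$, force a label forbidden in Case~3 (a node of type $(\cdot, 2)$ or higher on the terminal side combined with source-connectivity $2$ yields a $(2,2)$ or larger label that is fine, but connection to a \emph{third} terminal or a different second terminal would create a $(2,3)$ or $(3,\cdot)$ node). I would carefully check that every such downstream $w$ has $c_t(w) = 1$, so the entire segment after $v$ lies on $t_1$-edges, forcing $v = v_P$.

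Having established $v = v_P \in L_1$, the color of this leaf is exactly \colv: its source color is inherited from $v$ (the leaf is connected to $s_1$ and $s_2$ via $R_1, R_2$, and to no third source since $c_s(v) = 2$), and its terminal color is $(t_1, t_2)$ since it equals $v$. Thus $t_1$ has a leaf of color \colv. Repeating the argument with $R_2 = (s_2, \dots, v)$ concatenated with a path from $v$ to $t_2$ shows $t_2$ likewise has a leaf of color \colv, completing the claim. I expect the main obstacle to be the bookkeeping in the second step, namely rigorously ruling out that any vertex strictly between $v$ and the terminal acquires connectivity to a second terminal or an extra source, which is what guarantees $v$ is itself the leaf; once the Case~3 label restrictions are invoked correctly this should close cleanly.
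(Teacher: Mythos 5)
Your overall skeleton---extend a path through $v$ to a source--terminal path $P$, locate its leaf $v_P \in L_1$, and compute that leaf's color---matches the paper's proof, but your pivotal intermediate claim that $v = v_P$ is false in general, and your own parenthetical shows why. A vertex $w$ strictly between $v$ and $t_1$ on $P$ is downstream of $v$, hence connected to both $s_1$ and $s_2$; the Case~3 restrictions (no $(3,3)$, $(2,3)$, $(3,2)$ nodes) together with the fact that any terminal reachable from $w$ is also reachable from $v$ only force $w$ to be either a $(\cdot,1)$ node or a $(2,2)$ node of exactly the color $(s_1,s_2,t_1,t_2)$. The latter possibility is perfectly consistent with Case~3---you concede this yourself when you write that such a label ``is fine''---and whenever it occurs the leaf $v_P$ is that later $(2,2)$ node (or one still further downstream), not $v$. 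So the step ``every such downstream $w$ has $c_t(w)=1$, forcing $v = v_P$'' cannot be closed: nothing in the hypotheses prevents a chain of same-colored $(2,2)$ nodes after $v$, and in that situation $v$ is simply not the leaf.

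The repair is to abandon the identification $v = v_P$ and argue directly about $v_P$, which is exactly what the paper does. Since $v_P$ is (weakly) downstream of $v$, every source connected to $v$ is connected to $v_P$, so $c_s(v_P) \geq c_s(v) = 2$; if $c_s(v_P) = 3$ then $v_P$ would be a forbidden $(3,2)$ node, so its source color is exactly $(s_1,s_2)$. Dually, every terminal reachable from $v_P$ is reachable from $v$, so $c_t(v_P) \leq c_t(v) = 2$, while $c_t(v_P) \geq 2$ by the definition of a leaf; hence its terminal color is exactly $(t_1,t_2)$. This shows $t_1$ has a leaf of color $(s_1,s_2,t_1,t_2)$ without ever needing to know where on $P$ the leaf sits, and the argument for $t_2$ is symmetric. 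Note that your final paragraph already contains most of this inheritance reasoning---it just needs to be applied to $v_P$ itself rather than being predicated on the unprovable equality $v_P = v$.
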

\begin{proof}
W.l.o.g, let \stlv = $(s_1, s_2, t_1, t_2)$. This implies that there exists a path $P$ between $v$ and $t_1$.
Let $\ell$ be a leaf of $t_1$ on $P$.
It follows directly from the definition of a leaf that $\ell$ is the last node on $P$ with terminal label at least $2$, namely $c_t(\ell)\geq 2$. Namely, if $c_t(\ell) =1$ then the incoming link of $\ell$ on $P$ would be a $t_1$-edge (contradicting the assumption that $\ell$ is a leaf).
Moreover, $c_t(\ell)$ is exactly $2$ and no larger as otherwise $c_t(v)$ would also be greater than 2 contradicting our assumptions in the claim.
This implies that the terminal label of $\ell$ is exactly $(t_1,t_2)$.
As $\ell$ is downstream of $v$ it holds that $c_s(\ell) \geq c_s(v) = 2$.
Here also, it holds that $c_s(\ell)$ is exactly $2$, otherwise $\ell$ would be a $(3,2)$ node (contradicting our assumption for case 3).
This implies that the source label of $\ell$ is $(s_1,s_2)$.
Therefore, $t_1$ has a leaf of label $(s_1, s_2, t_1, t_2)$. A similar argument holds for $t_2$.
\end{proof}

The notion of an $st$-label is useful for the set of graphs under case 3, since we can show that there can never be an edge between nodes of different $st$-labels. 
\begin{claim}
\label{claim:sep_color_subgraphs}
Consider a graph $G$, with sources, $s_i, i = 1, \dots, 3$, and terminals $t_j, j = 1, \dots 3$, such that it does not have any $(3,3), (2,3)$ or $(3,2)$ nodes. There does not exist an edge between $(2,2)$ nodes of different $st$-labels in $G$.
\end{claim}
\begin{proof}
Assume otherwise and consider two $(2,2)$ nodes $v_1$ and $v_2$ such that \stl($v_1$) $\neq$ \stl($v_2$), for which there is an edge $(v_1, v_2)$ in $G$. Note that if the source labels of \stl($v_1$) and \stl($v_2$) are different, then $v_2$ has to be a $(3,2)$ node, which is a contradiction. Likewise, if the terminal labels of \stl($v_1$) and \stl($v_2$) are different, then $v_1$ has to be a $(2,3)$ node, which is also a contradiction.
\end{proof}

Claim~\ref{claim:sep_color_subgraphs} implies that we are free to assign any coding coefficients on a subgraph induced by nodes of one $st$-label, without having to worry about the effect of this on another subgraph induced by nodes of a different $st$-labels (simply because there is no such effect).

Our approach is as follows.
We divide the set of graphs under case 3, into various classes, depending on the number of distinct $st$-labels that exist in the graph. It turns out that as long as the number of $st$-labels in the graph is not 2, i.e., either 0,1 or 3 and higher,
then there is a simple argument which shows that each terminal can be satisfied. The argument in the case of two distinct $st$-labels is a bit more involved and is developed separately. It can be shown that our counter-example in Section \ref{sec:counter} is a case where there are two $st$-labels. Note however, that in our counter-example there are certain $s_i - t_j$ pairs that have only one path between them (and thus the sum $X_1+X_2+X_3$ cannot be computed at all terminals). 
\begin{claim}
\label{claim:color_multicast}
Consider the subgraph induced by the vertices with a certain $st$-label, w.l.o.g. $(s_1, s_2, t_1, t_2)$ in $G$, denoted by $G_{(s_1, s_2, t_1, t_2)}$. There exists an assignment of encoding vectors over $G_{(s_1, s_2, t_1, t_2)}$, such that any (unit entropy) function of the sources $X_1$ and $X_2$ can be multicasted to all nodes in $G_{(s_1, s_2, t_1, t_2)}$. Moreover, such encoding vector assignments can be done independently over subgraphs of different $st$-labels.
\end{claim}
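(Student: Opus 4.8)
The plan is to recognize Claim~\ref{claim:color_multicast} as an instance of the classical single-source multicast problem, and to discharge it by invoking the min-cut/max-flow characterization of multicast capacity (\cite{al,lc,rm}) together with Lemma~\ref{claim:sep_color_subgraphs} for the independence assertion. The first thing I would record is the basic observation underlying the whole claim: since every node of the subgraph $G_{(s_1,s_2,t_1,t_2)}$ is connected to $s_1$ and $s_2$ but \emph{not} to $s_3$, every global coding vector appearing on an edge of this subgraph has a zero in its third coordinate. Hence every signal carried inside the subgraph is a function of $(X_1,X_2)$ alone, and for a linear code the target value $W=a X_1 + b X_2$ (with $(a,b)\neq 0$ a unit-entropy combination) is a legitimate signal to request at every node.

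Next I would cast the statement as a multicast. I introduce a virtual super-source $\sigma$ holding the single symbol $W$, attach it to the subgraph at the points where the information of $s_1$ and $s_2$ enters, and declare every node of $G_{(s_1,s_2,t_1,t_2)}$ to be a sink. The claim then becomes exactly that $W$ can be multicast from $\sigma$ to all of these sinks. By the multicast theorem it suffices to verify that the min-cut from $\sigma$ to each node $v$ of the subgraph is at least $1$; a sufficiently large field size $2^m$ (which we are free to choose) then yields a linear code delivering $W$ everywhere.

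The real work, and the step I expect to be the main obstacle, is verifying this min-cut condition, i.e.\ arguing that $W$ can actually be \emph{formed} and then routed to every node of the subgraph. This is where the hypotheses are used: each node is connected to both $s_1$ and $s_2$, so both $X_1$ and $X_2$ can reach a common point at which $W$ is produced, and the two-edge-disjoint-path guarantee (vertex-disjoint in $\hat{G}$ after the reduction of Section~\ref{sec:map}) supplies enough flow to produce $W$ and forward a copy of it to $v$. Care is required for two reasons: paths from $s_1,s_2$ to a node $v$ of the color subgraph need not remain inside the induced subgraph (they may traverse nodes of other labels such as $(1,2)$ or $(1,3)$), and the total-degree-at-most-three restriction limits how many independent combinations a single node can receive. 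I would therefore run the min-cut argument in the relevant subnetwork of $G$ consisting of the color subgraph together with the source-to-subgraph paths, rather than in the induced subgraph in isolation.

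Finally, the independence assertion follows immediately from Lemma~\ref{claim:sep_color_subgraphs}: since there is no edge between $(2,2)$ nodes of different colors, the coding coefficients chosen on $G_{(s_1,s_2,t_1,t_2)}$ neither constrain nor are constrained by those chosen on any other color subgraph. Consequently each color subgraph may be treated, and its target function $W$ chosen, completely independently of the others, which is precisely the flexibility the later case analysis will exploit.
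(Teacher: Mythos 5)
Your proposal has a genuine gap at exactly the point you flag as ``the real work.'' The reduction to single-source multicast is circular: a virtual super-source $\sigma$ ``holding $W = aX_1+bX_2$'' can only be attached to the network after one has established \emph{where and how} $W$ gets formed, and forming a function of two sources located at different nodes is precisely what the claim asserts and what must be proved. Indeed, the paper's own counterexample in Section~\ref{sec:counter} shows that sum computation is \emph{not} equivalent to multicast connectivity from a super-source: there, every source--terminal pair is connected (so the naive min-cut condition from a super-source holds), yet the sum cannot be delivered. So invoking the multicast theorem of \cite{al,lc,rm} cannot discharge the claim; the burden simply moves to the ``formation'' step, which your proposal leaves at the level of ``both $X_1$ and $X_2$ can reach a common point.'' That statement needs an argument, because the paths bringing $X_1$ and $X_2$ toward a node of the color subgraph could a priori be entangled with paths serving other nodes or carrying other sources' data, and unit-capacity edges then force conflicting demands. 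Your appeal to the two-edge-disjoint-path hypothesis is also misdirected: this claim does not use it at all (it is reserved for the two-color case, Lemma~\ref{lem:final_case}); what makes the claim true is the Case~3 assumption that $G$ has no $(3,3)$, $(2,3)$ or $(3,2)$ nodes.

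The paper's proof resolves the formation step by a short structural argument that your proposal is missing. Take any node $v^*$ of in-degree $0$ in $G_{(s_1,s_2,t_1,t_2)}$. If a path from $s_1$ to $v^*$ met a path from $s_3$ at some node $v'$, then $v^*$ (downstream of $v'$) would be connected to all three sources, contradicting that $v^*$ is a $(2,2)$ node; if it met a path from $s_2$ at $v'$, then $v'$ and all nodes from $v'$ to $v^*$ would themselves have color $(s_1,s_2,t_1,t_2)$ (they cannot be $(2,3)$ or $(3,\cdot)$ nodes under Case~3), contradicting that $v^*$ has in-degree $0$ in the color subgraph. Hence $X_1$ and $X_2$ arrive at $v^*$ \emph{in the clear} on non-interfering paths, $v^*$ computes the desired function, and floods it downstream through the color subgraph; since every node of the color subgraph is downstream of some in-degree-$0$ node, this covers all of $G_{(s_1,s_2,t_1,t_2)}$. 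Your independence argument via Lemma~\ref{claim:sep_color_subgraphs} does match the paper's and is fine, but the heart of the claim requires the combinatorial argument above, not a min-cut verification.
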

\begin{proof}
Note that we are working with directed acyclic graphs. Thus, there is a node $v^*$ in $G_{(s_1, s_2, t_1, t_2)}$, such that it has no incoming edges in $G_{(s_1, s_2, t_1, t_2)}$. Next, note that the path from $s_1$ to $v^*$ has no intersection with a path from $s_2$ or $s_3$. To see this, suppose that there was such an intersection at node $v'$. If there is a path from $s_3$ to $v'$, then $v^*$ is a $(3,2)$ node (which contradicts the assumption that $v^*$ is a $(2,2)$ node).
If there is a path from $s_2$ to $v'$, then $v'$ and the remaining vertices connecting $v'$ to $v^*$ on the path from $s_1$ to $v^*$ have $st$-label $(s_1, s_2, t_1, t_2)$.
Contradicting the fact that $v^*$ has no incoming edges in $G_{(s_1, s_2, t_1, t_2)}$. Likewise, we see that the path from $s_2$ to $v^*$ has no intersection with a path from $s_1$ or $s_3$.

Therefore, the path from $s_1$ to $v^*$ carries $X_1$ exclusively,
and likewise for the path from $s_2$ to $v^*$.
Thus, $v^*$ can obtain both $X_1$ and $X_2$ and can compute any (unit entropy) function of them.
Moreover, $v^*$ can transmit this function to all nodes of $G_{(s_1, s_2, t_1, t_2)}$ downstream of $v^*$.
As the argument above can be repeated for any node $v^*$ of in-degree 0 in $G_{(s_1, s_2, t_1, t_2)}$ it follows that all nodes of $G_{(s_1, s_2, t_1, t_2)}$ can obtain the desired function of $X_1$ and $X_2$.


Finally, we note that the encoding functions assigned to edges in subgraphs of different $st$-labels can be done independently,
since there does not exist any edge between nodes of different $st$-labels (from Claim \ref{claim:sep_color_subgraphs}), and all $(1,\cdot)$ edges use the same encoding scheme regardless of the $st$-label at hand.
\end{proof}

\begin{lemma}
\label{lemma:one_or_three_colors}
Consider a graph $G$, with sources, $s_i, i = 1, \dots, 3$, and terminals $t_j, j = 1, \dots 3$, such that (a) it does not have any $(3,3), (2,3)$ or $(3,2)$ nodes, and (b) there exists at least one $s_i - t_j$ path for all $i$ and $j$. Consider the set of all $(2,2)$ nodes in $G$ and their corresponding $st$-labels. If there exist no $st$-labels, exactly one $st$-label or at least three distinct $st$-labels in $G$, then there exists a set of coding vectors such that each terminal can recover $\sumxi$.
\end{lemma}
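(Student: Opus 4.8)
The plan is to handle the three regimes --- no colors, one color, and three or more colors --- by a common reduction. By Claim~\ref{claim:color_multicast} the subgraph induced by any single color can independently multicast any (unit-entropy) linear function of that color's two sources, and by Lemma~\ref{claim:sep_color_subgraphs} distinct color-subgraphs share no edges; on the remaining, colorless part of $G$ every edge is connected to at most one source and may carry it cleanly. So I would reduce the construction to (i) choosing one linear function of two sources for each color that occurs and (ii) forwarding clean single sources on the colorless part, and then ask only that the signals arriving at each terminal $t_j$ through its leaf set $L_j$ linearly span $\sumxi$. Since the $t_j$-edges form a subgraph rooted at $t_j$ and disjoint from the $t_{j'}$-edges, the final summation can always be arranged privately inside each terminal's own subgraph.

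When there are no colors there are no $(2,2)$ nodes, so every leaf has source-label of size one: its type is $(1,2)$ or $(1,3)$, because a leaf has $c_t\geq 2$ and a leaf with $c_s\geq 2$ would be one of the excluded types $(2,2),(2,3),(3,2),(3,3)$. Moreover every vertex upstream of such a leaf also has $c_s=1$, so each source can be forwarded cleanly all the way to every leaf it feeds. First I would push each $X_i$ along the colorless part to the leaves connected to $s_i$; then, inside the subgraph rooted at $t_j$, I would select one leaf per source (these exist because $t_j$ is connected to all three sources) and sum their contributions, assigning coefficient $0$ to the unused leaves. Consistency across terminals is automatic: the shared colorless region always carries the same clean $X_i$, and the summations live in edge-disjoint terminal subgraphs.

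For a single color, say $(s_1,s_2,t_1,t_2)$, I would use Claim~\ref{claim:leaf_existence} to guarantee that both $t_1$ and $t_2$ own a leaf of this color, and set the color-subgraph to carry $X_1+X_2$. Each of $t_1,t_2$ then receives $X_1+X_2$ on its colored leaf together with a clean $X_3$ on a single-source leaf --- the entry leaf of an $s_3$--$t_j$ path cannot be a $(2,2)$ node, since its color would then contain $s_3$, contradicting that the only color is $(s_1,s_2,t_1,t_2)$ --- and their sum is $\sumxi$. Terminal $t_3$ owns only single-source leaves and is treated exactly as in the colorless regime. The one thing to check is that the two constructions never collide: an edge that both fed the colored subgraph and carried a clean source toward $t_3$ would be connected to $s_3$ and to all of $t_1,t_2,t_3$, i.e.\ a forbidden $(2,3)$ or $(3,3)$ node.

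The case of three or more colors is where the real difficulty lies, and I expect it to be the crux. The setup is the same: assign each occurring color a linear function of its two sources and route clean singles elsewhere, asking that the signals reaching each $t_j$ span $\sumxi$. Because we work over $GF(2^m)$ with arbitrary coefficients, any \emph{single} terminal can be satisfied in isolation --- even one whose three sources arrive only through two colors that share a common source (the configuration of the counter-example of Section~\ref{sec:counter}) can be fixed by orienting those two functions suitably. The genuine obstacle is \emph{global} consistency: each color is shared by the two terminals of its terminal-pair, so the function chosen to please one terminal is imposed on the other. The heart of the argument is thus to show that three or more distinct colors always admit one assignment that is simultaneously consistent for all terminals, while two colors need not. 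I would establish this by a finite case analysis on which cells of the $3\times 3$ array (source-pair against terminal-pair) are occupied, using Claim~\ref{claim:leaf_existence} to identify the leaves each terminal owns and hypothesis (b) to locate each terminal's remaining clean sources; the idea to be made rigorous is that a third occupied cell provides exactly the additional freedom needed to decouple the constraints that a two-color configuration leaves entangled.
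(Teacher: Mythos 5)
Your treatment of the first two regimes (no colors; exactly one color) is correct and essentially identical to the paper's own: singleton leaves plus connectivity handle the colorless case, and Claims~\ref{claim:leaf_existence} and \ref{claim:color_multicast} together with a singleton $X_3$ leaf handle the one-color case. The genuine gap is in the third regime --- at least three distinct colors --- which is precisely where the paper's proof does almost all of its work, and where your proposal stops at a plan. You name the right combinatorial object (the incidence pattern of source-pair labels against terminal-pair labels, which is the paper's auxiliary bipartite graph $G_{aux}$), and you correctly identify global consistency across terminals sharing a color as the obstacle, but then you defer the resolution: ``the idea to be made rigorous is that a third occupied cell provides exactly the additional freedom needed.'' That deferred step \emph{is} the lemma; asserting the freedom exists does not prove it. The paper closes it with two concrete devices your proposal never produces: (i) a case analysis organized by the degree sequence of the terminals in $G_{aux}$ --- $(0,3,3)$, $(2,2,2)$, $(1,2,3)$ for exactly three colors, plus a separate argument for four or more colors, several sub-cases of which are finished by greedy encoding together with singleton leaves guaranteed by hypothesis (b); and (ii), decisively, the fixed assignment of Table~\ref{table:encoding}: send $2X_1+X_2$ on source color $(s_1,s_2)$, $X_2+2X_3$ on $(s_2,s_3)$, and $X_1-X_3$ on $(s_1,s_3)$, chosen so that \emph{any two} of these values determine $X_1+X_2+X_3$. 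This two-out-of-three property is exactly what ``decouples the entangled constraints'' when two terminals share two colors; without exhibiting such an assignment (or an equivalent), the heart of the lemma is unproved.

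A second, concrete problem with the plan as written: you appeal to ``arbitrary coefficients over $GF(2^m)$,'' but the paper explicitly requires the field characteristic to exceed $2$ in these cases. For instance, in the sub-case where all three colors share the same terminal pair, greedy encoding delivers $X_1+X_2$, $X_2+X_3$ and $X_1+X_3$ to each of those two terminals; over characteristic $2$ the span of these three values does not contain $X_1+X_2+X_3$ at all, whereas over odd characteristic their sum is $2(X_1+X_2+X_3)$ and one divides by $2$. So the natural encodings your sketch would reach for genuinely fail over $GF(2^m)$; whether characteristic $2$ can be rescued with cleverer coefficients is exactly the kind of question a complete proof must settle, and the paper settles it by changing the field rather than by the route you suggest.
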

\begin{proof}
Note that all leaves in $G$ are of type $(1,2), (1,3)$ or $(2,2)$.
This implies that any terminal $t_j$ that does not have a $(2,2)$ leaf with source $st$-label including $s_i$, must have a $(1,\cdot)$ leaf (i.e., a leaf connected to a single source) at which $X_i$ can be recovered, for instance by simply forwarding the source information along the path to the leaf.
We refer to such leaves as {\em singleton} $X_i$ leaves.
The above follows directly by the connectivity assumption (b) stated in the Lemma. Recall that in Section \ref{sec:net-coding-model}, we presented the network coding model as one where each symbol flowing on an edge is from a field of size $q$. In cases 2 and 3 in the analysis below, we assume that the characteristic of the field of operation is $> 2$. This can for instance be done by choosing $q=3$.
\begin{figure}[t]
\centering
\includegraphics[width=80mm, clip=true]{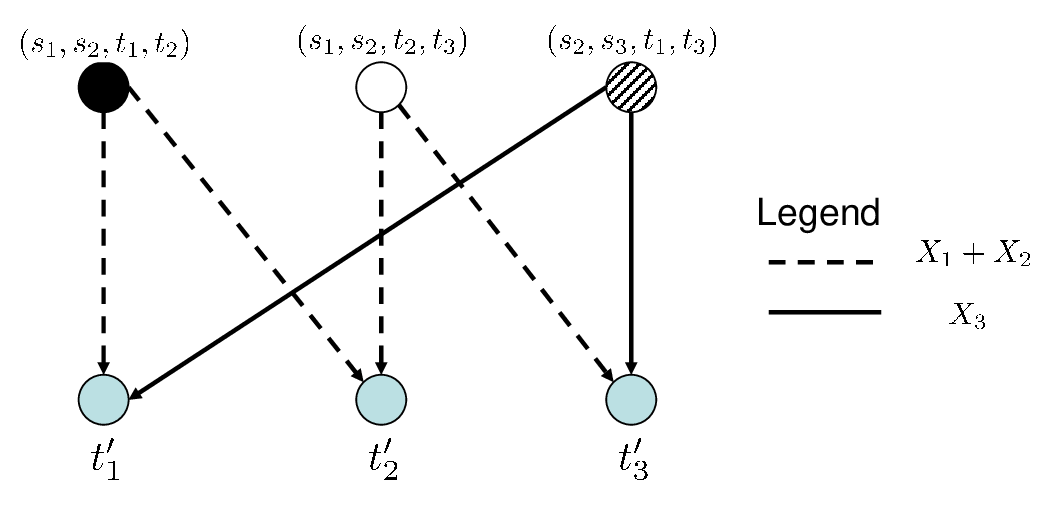} \vspace{-3mm}
\caption{\label{fig:aux_bipartite_graph_1} A possible instance of $G_{aux}$ when the degree sequence of the terminals is $(2, 2, 2)$. The encoding specified in the legend denotes the encoding to be used on the appropriate subgraphs.}
\end{figure}
\begin{itemize}
\item[(0)] {\it Case 0.} There are no $st$-labels in $G$.\\
This implies that there are no $(2,2)$ nodes in $G$ and thus all terminals $t_j$ have distinct leaves holding $X_1$, $X_2$, and $X_3$ respectively. It suffices to design a simple code on the paths from those leaves to $t_j$ which enables $t_j$ to recover the sum $X_1+X_2+X_3$.
\item[(i)] {\it Case 1.} There is only one $st$-label in $G$.\\
In this case perform greedy encoding (cf. Definition \ref{def:greedy}) on the $r$-edges. We show that each terminal can recover $\sumxi$ from the content of its leaves.
W.l.o.g, suppose that the $st$-label is $(s_1, s_2, t_1, t_2)$. Using Claim \ref{claim:leaf_existence}, this means that both $t_1$ and $t_2$ have leaves of this $st$-label. The greedy encoding implies that $t_1$ and $t_2$ can obtain $X_1 + X_2$ from the corresponding leaves. Moreover, both $t_1$ and $t_2$ have a singleton leaf containing $X_3$, because of the connectivity requirements. Therefore, they can compute $\sumxi$. The terminal $t_3$ has only singleton leaves, such that there exists at least one $X_1, X_2$ and $X_3$ leaf. Thus it can compute their sum.
\item[(ii)] {\it Case 2.} There exist exactly three distinct $st$-labels in $G$.\\
It is useful to introduce an auxiliary bipartite graph that denotes the existence of the $st$-labels at the leaves of the different terminals. This bipartite graph denoted $G_{aux}$ is constructed as follows. There are three nodes $t_i', i = 1, \dots, 3$ that denote the terminals on one side and three nodes $c_i', i = 1, \dots, 3$ that denote the $st$-labels on the other side. If the $st$-label $c_i'$ has $t_j$ in its support, then there is an edge between $c_i'$ and $t_j'$, i.e., $t_j$ has a leaf of $st$-label $c_i'$. See Figure~\ref{fig:aux_bipartite_graph_1}.
The following properties of $G_{aux}$ are immediate.
\begin{figure}[t]
\centering
\includegraphics[width=80mm, clip=true]{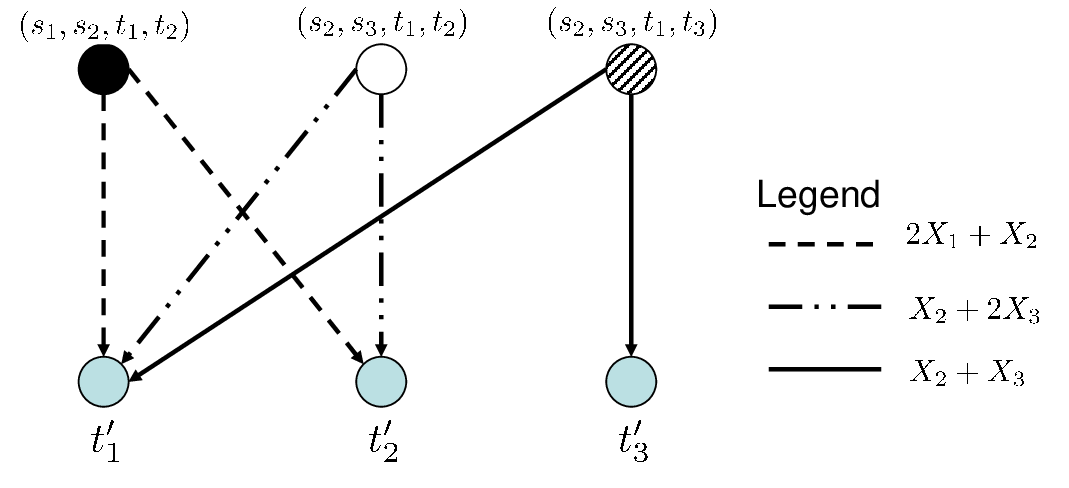} \vspace{-3mm}
\caption{\label{fig:aux_bipartite_graph_2}A possible instance of $G_{aux}$ when the degree sequence of the terminals is $(3, 2, 1)$. The encoding specified in the legend denotes the encoding to be used on the appropriate subgraphs.}
\end{figure}
\begin{itemize}
\item Each $c_i'$ has degree-2.
\item Each $t_i'$ has degree at most 3 (as there are 3 distinct $st$-labels).
\item Multiple edges between nodes are disallowed.
\end{itemize}
Note that there are exactly three possible source $st$-labels ($(s_1, s_2), (s_2, s_3)$ and $(s_3, s_1)$) and three possible terminal $st$-labels ($(t_1, t_2), (t_2, t_3)$ and $(t_3, t_1)$).
We now perform a case analysis depending upon the degree sequence of nodes $t_j', j = 1, \dots, 3$ in $G_{aux}$. The degree sequence is specified by a 3-tuple, where we note that the sum of the entries has to be 6.

\begin{itemize}
\item[a)] The degree sequence is a permutation of $(0,3,3)$.\\
This only happens if the terminal label of all $st$-labels, $c_i', i = 1, \dots, 3$ is the same and in turn implies that the source label of each $st$-label is distinct, i.e., the source $st$-labels include $(s_1, s_2), (s_2, s_3)$ and $(s_1, s_3)$. In this case, greedy encoding (cf. Definition \ref{def:greedy}) works for the two terminals in the $st$-label support. This is because each terminal will obtain $X_1+X_2$, $X_2+X_3$ and $X_1+X_3$ at its leaves 
(using Claims \ref{claim:leaf_existence} and \ref{claim:color_multicast}) from which the terminal can compute $2\sumxi$. The remaining terminal is not connected to any (2,2) leaf, which implies that all its leaves contain singleton values, from which it can compute $\sumxi$.
\item[b)] The degree sequence is $(2,2,2)$.\\
This only happens if all the terminal labels of the $st$-labels are distinct, i.e., the terminal labels are $(t_1, t_2)$, $(t_2, t_3)$ and $(t_1, t_3)$. Now consider the possibilities for the source labels.

\indent If there is only one source label, then greedy encoding ensures that the sum of exactly two of the sources reaches each terminal. The connectivity condition guarantees that the remaining source is available as a singleton at a leaf of each terminal. Therefore we are done.

\indent If there are exactly two distinct source $st$-labels, then we argue as follows (see Figure~\ref{fig:aux_bipartite_graph_1}). On the subgraphs induced by the $st$-labels with the same source label, perform greedy encoding. On the remaining subgraph, propagate the remaining useful source. We illustrate this with an example that is w.l.o.g. Suppose that the $st$-labels are $(s_1, s_2, t_1, t_2)$, $(s_1, s_2, t_2, t_3)$ and $(s_2, s_3, t_1, t_3)$. We perform greedy encoding on the subgraphs of the first two $st$-labels, and only propagate $X_3$ on the subgraph of the third $st$-label. As shown in Figure \ref{fig:aux_bipartite_graph_1}, this means that terminals $t_1$ and $t_3$ are satisfied. Note that the connectivity condition dictates that $t_2$ has to have a leaf that has a singleton $X_3$, therefore it is satisfied as well.

\indent Finally, suppose that there are three distinct source $st$-labels. In this case we use the encoding specified in Table \ref{table:encoding} on the subgraphs of each source $st$-label. It is clear on inspection that $\sumxi$ can be recovered from any two of the received values (as from any two of the linear combinations stated, one can deduce the sum $X_1+X_2+X_3$).

\begin{table}[t]
\caption{Encoding on subgraphs of different source $st$-labels. Recovery of $\sumxi$ is possible from any two of the received values, using additions or subtractions.}
\centering
\begin{tabular}{c c}
\hline\hline
Source $st$-label & Encoding\\
\hline
$(s_1, s_2)$ & $2X_1 + X_2$\\
$(s_2, s_3)$ & $X_2 + 2X_3$\\
$(s_1, s_3)$ & $X_1 - X_3$\\
\hline
\end{tabular}
\label{table:encoding}
\end{table}

\item[c)] The degree sequence is a permutation of $(1, 2, 3)$.\\
In this case (see Figure~\ref{fig:aux_bipartite_graph_2}), the degree sequence dictates that there have to be two terminals that share two $st$-labels (namely, two terminals that together appear in two
different $st$-labels). This implies that the source label of those $st$-labels has to be different. For the subgraphs induced by these $st$-labels, we use the encoding proposed in Table \ref{table:encoding}. For the subgraph induced by the remaining $st$-label, we perform greedy encoding. For example, suppose that the $st$-labels are $(s_1, s_2, t_1, t_2)$, $(s_2, s_3, t_1, t_2)$ and $(s_2, s_3, t_1, t_3)$. As shown in Figure \ref{fig:aux_bipartite_graph_2}, $t_1$ and $t_2$ are clearly satisfied (even without using the information from $st$-label $(s_2, s_3, t_1, t_3)$). Terminal $t_3$ has to have a singleton leaf containing $X_1$ by the connectivity condition and is therefore satisfied.
\end{itemize}
Together, these arguments establish that in the case when there are three $st$-labels, all terminals can be satisfied.
\item[(ii)] {\it Case 3.} There exist more than three distinct $st$-labels in $G$.\\
Note that if there are at least four $st$-labels in $G$, then (a) there are two $st$-labels with the same terminal label, since there are exactly three possible terminal labels, and (b) for the $st$-labels with the same terminal labels, the source labels necessarily have to be different. Our strategy is as follows. For the terminals that share two $st$-labels, use the encoding proposed in Table \ref{table:encoding}. If the remaining terminal has access to only one source $st$-label, then use greedy encoding and note that this terminal has to have a singleton leaf. If it has access to at least two source $st$-labels, simply use the encoding in Table \ref{table:encoding} for it as well.
\end{itemize}
\end{proof}
It remains to develop the argument in the case when there are exactly two distinct $st$-labels in $G$. For this we need to explicitly use the fact that
there are two edge-disjoint paths between each $s_i - t_j$ pair.

\begin{lemma}
\label{lem:final_case}
Consider a graph $G$, with sources, $s_i, i = 1, \dots, 3$, and terminals $t_j, j = 1, \dots 3$, such that (a) it does not have any $(3,3), (2,3)$ or $(3,2)$ nodes, and (b) there exist at least two $s_i - t_j$ paths for all $i$ and $j$. Consider the set of all $(2,2)$ nodes in $G$ and their corresponding $st$-labels. If there exist exactly two distinct $st$-labels in $G$, then there exists a set of coding vectors such that each terminal can recover $\sumxi$.
\end{lemma}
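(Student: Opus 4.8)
The plan is to split on the relationship between the two colors present in $G$ and to isolate the single configuration in which hypothesis~(b) is genuinely needed. A color is a pair (source label, terminal label), and the two colors are distinct, so they cannot agree in both coordinates; hence exactly one of the following holds: (I) they share a source label but differ in their terminal label; (II) they share a terminal label but differ in their source label; or (III) they differ in both. I will dispose of (I) and (II) using only the machinery already developed, and then spend the bulk of the argument on~(III). Throughout, Lemma~\ref{claim:sep_color_subgraphs} lets the two color subgraphs be encoded independently.

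For (I), say both colors have source label $(s_1,s_2)$. Then no $(2,2)$ node is connected to $s_3$, so every terminal, being connected to $s_3$ by hypothesis~(b), has a singleton $X_3$ leaf; multicasting $X_1+X_2$ on each color subgraph (Claim~\ref{claim:color_multicast}), which by Claim~\ref{claim:leaf_existence} reaches every terminal of that color, then lets each terminal add its $X_3$ and recover $\sumxi$. For (II), say both colors have terminal label $(t_1,t_2)$ with source labels $(s_1,s_2)$ and $(s_2,s_3)$; the uncovered terminal $t_3$ has no $(2,2)$ leaf, sees only singletons, and is trivially satisfied, while multicasting $X_1+X_2$ on the first color and $X_3$ on the second gives $t_1$ and $t_2$ both $X_1+X_2$ and $X_3$.

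Sub-case (III) is the hard one: after relabelling, the colors are $A=(s_1,s_2,t_1,t_2)$ and $B=(s_2,s_3,t_2,t_3)$, sharing the source $s_2$ and the terminal $t_2$. Now $t_1$ is served only by $A$ plus a forced singleton $X_3$, so it forces $A$ to deliver $X_1+X_2$, and symmetrically $t_3$ forces $B$ to deliver $X_2+X_3$; under these forced choices the shared terminal $t_2$ -- whose $(2,2)$ leaves already cover all three sources, so hypothesis~(b) supplies it with \emph{no} singleton -- sees only $X_1+X_2$ and $X_2+X_3$, from which $\sumxi$ is unrecoverable. This is exactly the obstruction of Section~\ref{sec:counter}, and it is where the second edge-disjoint path must enter. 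My remedy is to upgrade the rank-$1$ multicast of Claim~\ref{claim:color_multicast} to a rank-$2$ multicast inside each color subgraph that delivers \emph{both} of a color's sources to its terminals. The enabling observation is that a $B$-node is connected to neither $s_1$ nor $t_1$ (and symmetrically), so no source-to-$t_1$ path meets $B$; the two edge-disjoint paths from $s_1$ and from $s_2$ to $t_1$ then give, by a min-cut argument, max-flow $2$ from $\{s_1,s_2\}$ to $t_1$ in the part of $G$ avoiding $B$, whence the multicast theorem of \cite{rm} delivers $(X_1,X_2)$ to $t_1$, and likewise $(X_2,X_3)$ to $t_3$ inside $B$.

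It remains to serve $t_2$, and here I would branch on its leaves. If $t_2$ has any singleton leaf the claim is immediate (for instance a singleton $X_2$ gives $\sumxi=(X_1+X_2)+(X_2+X_3)-X_2$); so assume every leaf of $t_2$ is a $(2,2)$ node. Then its two $s_1$-paths end at $A$-leaves and its two $s_3$-paths at $B$-leaves, while each $s_2$-path enters one of the two subgraphs. If some $s_2$-path enters $A$, then together with the two disjoint $s_1$-paths a min-cut argument gives max-flow $2$ from $\{s_1,s_2\}$ to $t_2$ inside $A$, so $t_2$ receives $(X_1,X_2)$ from $A$ and an $X_3$-bearing symbol from $B$ (to which it is still connected via $s_3$), yielding full rank; otherwise both $s_2$-paths enter $B$ and, with the two disjoint $s_3$-paths, $t_2$ receives $(X_2,X_3)$ from $B$ and an $X_1$-bearing symbol from $A$. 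I expect the crux to be precisely this confinement-and-flow step for the shared source/terminal: showing that the pairwise two-path hypothesis produces max-flow $2$ from a color's source pair to $t_2$ \emph{while staying inside that color's subgraph}, and verifying that the symbol $t_2$ collects from the complementary color really depends on the missing source, so that its three received combinations are linearly independent -- a point where, as in the proof of Lemma~\ref{lemma:one_or_three_colors}, one may need the field characteristic to exceed $2$.
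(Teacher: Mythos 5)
Your split into (I) shared source label, (II) shared terminal label, (III) both differ is exactly the paper's, and your handling of (I) and (II) is correct (your encodings even sidestep the characteristic $>2$ requirement that the paper's Table~\ref{table:encoding} imposes). Sub-case (III) is where the paper does its real work, and there your proposal has two genuine gaps.

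First, composition. You build the code by separate invocations of the multicast theorem on overlapping subnetworks --- $(X_1,X_2)$ to $t_1$ in the part of $G$ avoiding $B$, $(X_2,X_3)$ to $t_3$ in the part avoiding $A$, then a further invocation for $t_2$ --- and you justify treating these independently by Lemma~\ref{claim:sep_color_subgraphs}. But that lemma only forbids edges between $(2,2)$ nodes of \emph{different colors}; the subnetworks you invoke the theorem on are not the color subgraphs, and they genuinely share edges through uncolored nodes: for instance, edges on $s_2$-paths through nodes connected to $s_2$ alone (these lie both on $B$-avoiding paths toward $t_1,t_2$ and on $A$-avoiding paths toward $t_2,t_3$), or $(2,1)$ nodes connected to $s_1$, $s_3$ and $t_2$. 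Separately produced multicast codes will in general disagree on such shared edges, so a single coherent code must be exhibited. This is repairable --- every conflict lives on an edge forced to carry a function of $X_2$ alone, or on a $t_2$-terminal edge, and can be normalized away --- but the proof has to say so.

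Second, the step you yourself flag as the crux is not closed, and it does not follow from your setup. In the branch where $t_2$ takes rank $2$ from one side, the symbol it ``collects from the complementary color'' has no reason to involve the missing source: the multicast code on the $B$-side is designed to serve $t_3$, and nothing prevents it from delivering pure $X_2$ (or even zero) to $t_2$'s $B$-leaves. This is precisely the difficulty the paper's proof is organized around: it runs a \emph{random} linear code on each side and proves, via Schwartz--Zippel-type arguments (Claims~\ref{claim:random1} and~\ref{claim:random2}), that with high probability every $(2,2)$ leaf of $t_2$ carries a combination with all coefficients nonzero, and that the two leaves on the vertex-disjoint paths carry linearly independent combinations. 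Your deterministic route can in fact be saved without randomness by pushing your own observation one step further: any $A$-leaf $\ell$ of $t_2$ is itself reached from $s_2$ by a path avoiding $B$ (every node on that path reaches $t_1$ through $\ell$, and no $B$-node reaches $t_1$), so inside the $B$-avoiding part one has $\text{max-flow}(s_1 \to t_2)\geq 2$, $\text{max-flow}(s_2 \to t_2)\geq 1$ and $\text{max-flow}(\{s_1,s_2\}\to t_2)\geq 2$; hence one two-source multicast on that side serves $t_1$ \emph{and} $t_2$ at rank $2$ simultaneously, and symmetrically on the $B$-side, making the complementary symbol unnecessary. As written, however, the branch-plus-complementary-symbol argument is incomplete at exactly the point where the paper's random-coding claims do the work.
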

\begin{proof}
As in the proof of Lemma \ref{lemma:one_or_three_colors}, we argue based on the content of the leaves of the terminals. Suppose that the auxiliary bipartite graph $G_{aux}$ is formed. If both the $st$-labels have the same terminal label (see Figure \ref{fig:last_case_1} for an example), then it is clear that the encoding in Table \ref{table:encoding} on the subgraphs induced by the $st$-labels suffices for the corresponding terminals. The third terminal has singleton leaves corresponding to each source and can compute $\sumxi$.

\begin{figure}[t]
\centering
\includegraphics[width=80mm, clip=true]{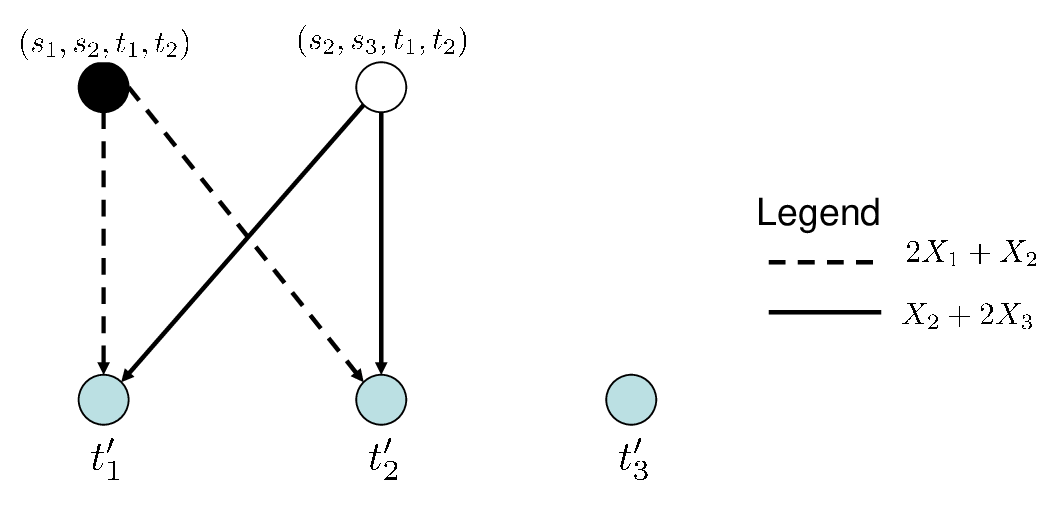} \vspace{-3mm}
\caption{\label{fig:last_case_1}An instance of $G_{aux}$ when there exist exactly two distinct $st$-labels under case 3, such that the terminal labels of the $st$-labels are the same.}
\end{figure}

Another possibility is that the terminal labels of the $st$-labels are different, but the source labels are the same. It should be clear that this case can be handled by greedy encoding on the $st$-labels.

The situation is more complicated when the terminal and source labels of the $st$-labels are different, see for example Figure \ref{fig:last_case_2}. In the case depicted, greedy encoding does not work since it satisfies $t_1$ and $t_3$ but not $t_2$. W.l.o.g., we assume that the $st$-labels are $(s_1, s_2, t_1, t_2)$ and $(s_2, s_3, t_2, t_3)$.
Now, we know that there exist two vertex-disjoint paths between $s_1$ (a similar argument can be made for $s_3$) and $t_2$. Each of these paths has a leaf for $t_2$. If one of the leaves is a $(1,\cdot)$ leaf that contains a singleton $X_1$, 
then performing greedy encoding on the two $st$-labels works since $t_2$ obtains $X_1 + X_2$, $X_1$ and $X_2 + X_3$ and the other terminals will obtain singleton leaves that satisfy their demand. Likewise, if there is a singleton leaf containing $X_3$ on the vertex disjoint paths from $s_3$ to $t_2$, then greedy encoding works.


Thus, the corresponding leaves of $t_2$ must be of type $(2,2)$.
This implies that there are at least four distinct leaves of $t_2$ of type $(2,2)$, two of $st$-label $(s_1, s_2, t_1, t_2)$ and two of $st$-label $(s_2, s_3, t_2, t_3)$. 
Our proof is concluded by the following claims.

Consider the subgraph induced by nodes labeled by one of the $st$-labels above, w.l.o.g. $(s_1, s_2, t_1, t_2)$, in $G$ together with the $(1,\cdot)$ nodes connected to either $s_1$ or $s_2$ in $G$. Denote this subgraph by $G'$.  Consider a random linear network code on the nodes of  $G'$ (namely, each node outputs a random linear combination of its incoming information over the underlying finite field of size $q$). Let $q = 2^m$. We show, with high probability (given $m$ is large enough), that such a code allows both $t_1$ and $t_2$ to receive two linearly independent combinations of $X_1$ and $X_2$ at their leaves. An analogous argument also holds for $t_2$ and $t_3$ when considering the $st$-label $(s_2, s_3, t_2, t_3)$ and the information $X_2$ and $X_3$. This suffices to conclude our assertion. Our proof is based on the following two claims.
\begin{claim}
\label{claim:random1}
Let $u$ be any leaf in $G'$. Let $U=\alpha X_1+\beta X_2$ be the incoming information of $u$. With probability $(1-2^{-m+1})^{|V|}$ both $\alpha$ and $\beta$ are not zero.
\end{claim}

\begin{proof}
Denote by $C = \{c_i\}$ the multiset of coefficients used in the random linear network code on $G'$. Namely, each $c_i$ is uniformly distributed in $GF(2^m)$, and the information on each edge $e$ is a linear combination of it's incoming information using coefficients from $C$ (each coefficient in $C$ is used only once).

It is not hard to verify that $\alpha$ is a multivariate polynomial in the variables in $C$ of total degree $\ell$, where $\ell$ is the length of the longest path between $s_i$ and $u$ (here $i=1,2$). Namely, $\ell \leq n = |V|$. Moreover, each variable $c_i$ in $\alpha$ is of degree at most $1$.
As $u$ is a $(2,2)$ leaf and is connected to $s_1$, there is a setting for the variables in $C$ such that $\alpha \ne 0$ (consider for example setting the values of variables in $C$ to match the greedy encoding function discussed previously). Thus, $\alpha$ is not the zero polynomial.
We conclude, using Lemma 4 of \cite{hoMKKESL06}, that $\alpha$ obtains that value $0$ with probability at most $1-(1-2^{-m})^n$ (over the choice of the values of variables in $C$).
(We note that Lemma 4 of \cite{hoMKKESL06} is a slightly refined version of the Schwartz-Zippel lemma.)
The same analysis holds for $\beta$.
Finally, to study the probability that either $\alpha$ or $\beta$ are zero we study the polynomial $\alpha \cdot \beta$, of total degree $2\ell$, where each variable $c_i$ in $\alpha\cdot\beta$ is of degree at most $2$.
Our assertion now follows from Lemma 4 of \cite{hoMKKESL06}.
\end{proof}

\begin{figure}[t]
\centering
\includegraphics[width=50mm, clip=true]{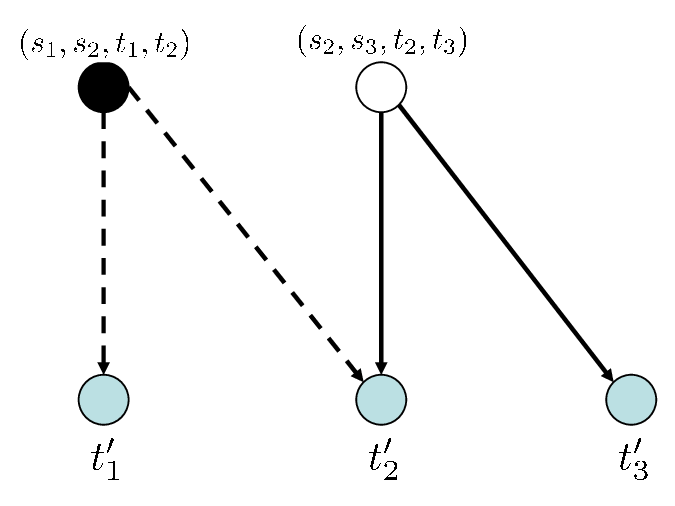} \vspace{-3mm}
\caption{\label{fig:last_case_2} An instance of $G_{aux}$ when there exist exactly two distinct $st$-labels under case 3, such that both the source labels and the terminal labels of the $st$-labels are different.}
\end{figure}
\begin{claim}
\label{claim:random2}
Consider the terminal $t_2$ and its two edge disjoint paths from $s_1$ denoted $P_1$ and $P_2$.
Let $u_1$ and $u_2$ be the corresponding leaves on paths $P_1$ and $P_2$.
Let $U_1=\alpha_1 X_1+\beta_1 X_2$ be the incoming information of $u_1$, and $U_2=\alpha_2 X_1+\beta_2 X_2$ the incoming information of $u_2$. With probability $(1-2^{-m+1})^n$ the vectors $\{(\alpha_i,\beta_i)\}_{i=1,2}$ are independent.
\end{claim}

\begin{proof}
We first note that, as the leaves of $t_2$ are of type $(2,2)$ and as both $u_1$ and $u_2$ are connected to $s_1$ it holds that both $u_1$ and $u_2$ are of $st$-label $(s_1, s_2, t_1, t_2)$ and in $G'$.
Our proof now follows the line of proof given in Claim~\ref{claim:random1}. Namely, let $C = \{c_i\}$ be the multiset of coefficients used in the random linear network code on $G'$. As before, $\alpha_1$, $\alpha_2$, $\beta_1$ and $\beta_2$ are multivariate polynomials in the variables in $C$. To study the independence between $U_1$ and $U_2$ we study the determinant $\Gamma$ of the $2\times 2$ matrix with rows $(\alpha_1,\beta_1)$, and $(\alpha_2,\beta_2)$.
The determinant $\Gamma$ is of total degree $2\ell$, where each variable $c_i$ in $\Gamma$ is of degree at most $2$.
So to conclude our assertion (via Lemma 4 of \cite{hoMKKESL06}) it suffices to prove that $\Gamma$ is not the zero polynomial.

To this end, we present an encoding function (a setting of assignments for the variables in $C$) for which $\Gamma$ will be $1$.
Consider the two disjoint paths connecting $s_1$ and terminal $t_2$ (denoted as $P_1$ and $P_2$).
Recall that $u_1$ and $u_2$ are the corresponding leaves of $st$-label $(s_1, s_2, t_1, t_2)$, where $u_i \in P_i$.
Let $v$ be the vertex closest to $s_1$ on these paths that is connected to $s_2$ (ties broken arbitrarily), assume w.l.o.g. that $v \in P_2$.
Let $P_3$ be the path connecting $s_2$ and $v$.
Consider the subgraph $H$ of $G'$ consisting of the paths $P_1,P_2$ and $P_3$.
Using the edges of $H$ alone, one can design a routing scheme such that $u_1$ will receive the information $X_1$ 
and $u_2$ the information $X_2$. This will imply that $(\alpha_1,\beta_1)=(1,0)$, $(\alpha_2,\beta_2)=(0,1)$, and $\Gamma =1$.
Indeed, just forward $X_1$ on $P_1$ and forward $X_2$ on $P_3$ until it reaches $v$ and then from $v$ to $u_2$ on $P_2$.
\end{proof}

We are now ready to complete the proof of Lemma~\ref{lem:final_case} for the case that $t_2$ has 4 type $(2,2)$ leaves.
Namely, we show that in this case Claims~\ref{claim:random1} and \ref{claim:random2} allow sum communication when random linear network coding is applied over the network.
We start with terminal $t_2$. By Claim~\ref{claim:random2}, with high probability $t_2$ will obtain two linearly independent linear combination of $X_1$ and $X_2$ on two of its leaves and
two linearly independent linear combination of $X_2$ and $X_3$ on the other pair of leaves. This will now allow $t_2$ to obtain the summation $X_1+X_2+X_3$ by an appropriate encoding over the reversed tree of $t_2$-edges in $G$.

Next, we address terminal $t_1$.
Consider its two edge disjoint paths from $s_1$ denoted $P_1$ and $P_2$.
Let $u_1$ and $u_2$ be the corresponding leaves on paths $P_1$ and $P_2$ (to simplify notation we use the same notation as previously used for $t_2$).
Here, we consider two cases, if both $u_1$ and $u_2$ are $(2,2)$ nodes, then by Claim~\ref{claim:random2} we are done (with high probability), as in the analysis of terminal $t_2$ above.
Namely, with high probability (given $m$ large enough) $t_1$ will receive two linearly independent combinations of $X_1$ and $X_2$ at $u_1$ and $u_2$.
Otherwise, $t_1$ has at least one singleton leaf with $X_1$ exclusively. 
Denote this leaf as $v_1$.
Notice that $t_1$ must have at least a single $(2,2)$ leaf (by Claim~\ref{claim:leaf_existence}), denote this leaf by $v_2$.
Finally, by Claim~\ref{claim:random1} it holds that with high probability the information present at $v_1$ and at $v_2$ is independent.

To conclude, notice that the discussion above (when applied symmetrically for $t_3$ and the $st$-label $(s_2, s_3, t_2, t_3)$) implies that all terminals are able to obtain the desired sum $X_1+X_2+X_3$ (by an appropriate setting of the encoding functions on their $(\cdot,1)$ edges).

\end{proof} 

\section{Discussion and Future Work}
\label{sec:conclusion}
In this work, we have introduced the problem of multicasting the sum of sources over a network. 
We have shown that in networks with unit capacity edges, and unit-entropy sources, with at most two sources or two terminals, the sum can be recovered at the terminals, as long as there exists a path between each source-terminal pair. Furthermore, we demonstrate that this characterization does not hold for three sources (3s)/three terminal (3t) networks. For the $3s/3t$ case we show that if each source terminal pair is connected by at least two edge disjoint paths, sum recovery is possible at the terminals. In each of these cases we present efficient network code assignment algorithms. 


\begin{figure}[t]
\centering
\includegraphics[width=90mm, clip=true]{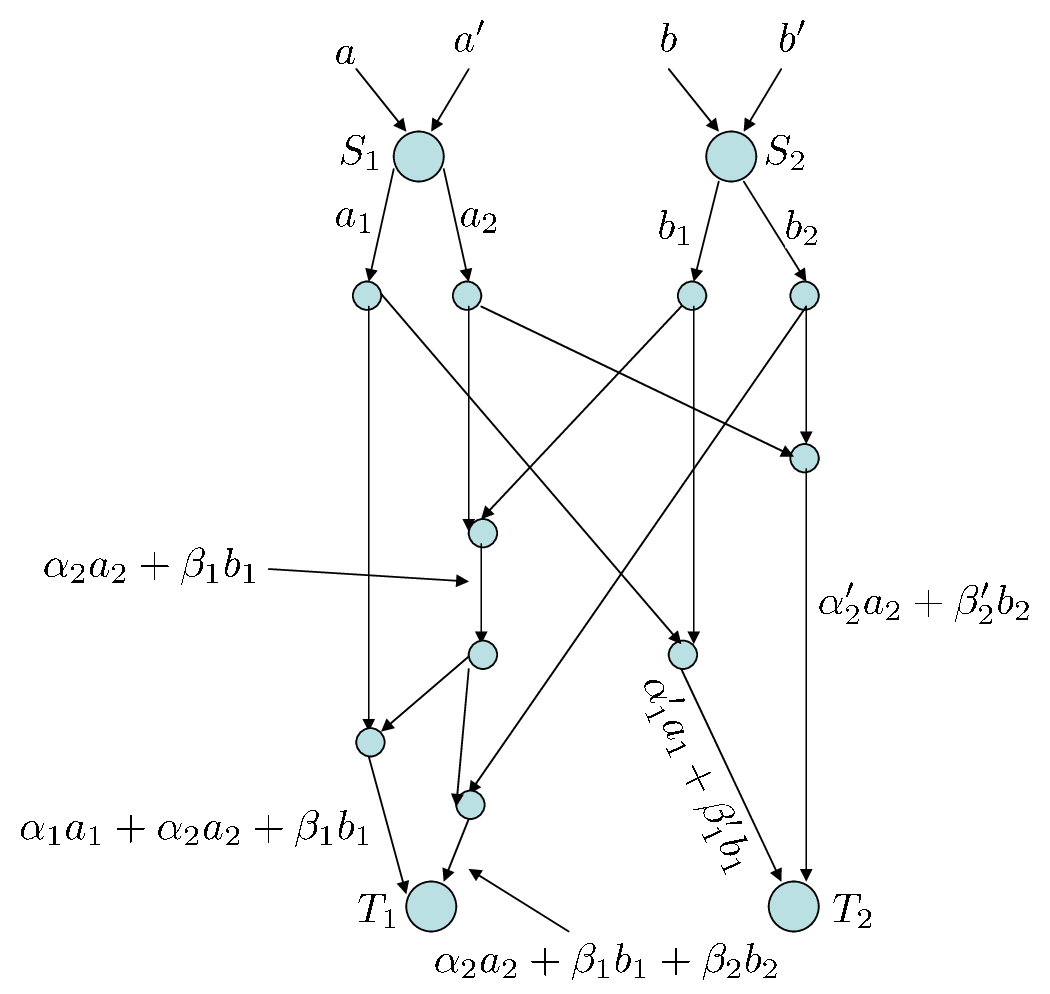} \vspace{-3mm}
\caption{\label{fig:higher-entropy-counter-eg}
Example of a network with two sources and two terminals, such that there exist two edge-disjoint paths between each source and each terminal. Source node $S_1$ ($S_2$) observes a source of entropy 2, $[a ~ a']$ ($[b ~ b']$). The terminals seek to reconstruct $[a + b ~ a' + b']$. However, this is impossible with linear codes.}
\end{figure}

Several questions remain open, that we discuss below.
\begin{itemize}
\item As our techniques do not seem to extend to the case of a higher number of sources and terminals, at present, the case of $|S| > 3$ and $|T| > 3$ is completely open. 
\item In our problem formulation, we have considered unit-entropy sources over unit-capacity networks. However, in general, one could consider sources of arbitrary entropies, by considering vector-sources (as considered in \cite{appuswamyFKZ08}), and requiring the terminals to recover a vector that contains component-wise function evaluations. This version of the problem is also open for the most part. In fact, in this case even our characterization for $|S| = 2$ does not hold. For example, consider the two-sources, two-terminals network shown in Figure \ref{fig:higher-entropy-counter-eg}, where each edge is of unit-entropy. Each source node observes a source of entropy two, that is denoted by a vector of length two. The terminals need to recover the vector sum.

In this network there are two sources, and based on our result in Section \ref{sec:case_2s_or_2t} it is natural to conjecture that if max-flow $(s_i - t_j) = 2$, holds for $i,j = 1,2$, then a network coding assignment exists. The network in Figure \ref{fig:higher-entropy-counter-eg} has this connectivity requirement. However, as shown in the Appendix, using linear codes to recover the vector sum at both the terminals is not possible.

\item We have exclusively considered the case of directed acyclic networks. An interesting direction to pursue would be to examine whether these characterizations hold in the case of networks where directed cycles are allowed.
\item Our work has been in the context of zero-error recovery of the sum of the sources. It would be interesting to examine whether the conclusion changes significantly if one allows for recovery with some (small) probability of error.
\end{itemize}

\appendix

\subsection{Discussion about network in Figure \ref{fig:higher-entropy-counter-eg}}
We prove that under linear network coding, recovering $[a+b ~ a'+b']$ at $T_1$ and $T_2$ is impossible.
We use the notation of Figure \ref{fig:higher-entropy-counter-eg}.
Let $A_1$ and $B_1$ be matrices such that $\begin{bmatrix} a_1\\ a_2 \end{bmatrix} = A_1 \begin{bmatrix} a \\ a' \end{bmatrix}$ and $\begin{bmatrix} b_1\\ b_2 \end{bmatrix} = B_1 \begin{bmatrix} b \\ b' \end{bmatrix}$. Without loss of generality, we can express the received vectors at terminals $T_1$ and $T_2$ as
\begin{align*}
z_{T_1} &= \begin{bmatrix} \alpha_1 & \alpha_2 & \beta_1 & 0\\
0 & \alpha_2 & \beta_1 & \beta_2
\end{bmatrix} \begin{bmatrix} a_1 \\ a_2 \\ b_1 \\ b_2\end{bmatrix}\\
z_{T_2} &= \begin{bmatrix} \alpha_1' & 0 & \beta_1' & 0\\
0 & \alpha_2' & 0 & \beta_2'
\end{bmatrix} \begin{bmatrix} a_1 \\ a_2 \\ b_1 \\ b_2\end{bmatrix}
\end{align*}
Using simple computations it is not hard to see that for both the terminals to be able to recover $[a+b ~a'+b']^T$ we need
\begin{align*}
\begin{bmatrix} \alpha_1 & \alpha_2 \\ 0 & \alpha_2 \end{bmatrix} A_1 &= \begin{bmatrix} \beta_1 & 0\\ \beta_1 & \beta_2 \end{bmatrix} B_1, \textrm{~and}\\
\begin{bmatrix} \alpha_1' & 0 \\ 0 & \alpha_2' \end{bmatrix} A_1 &= \begin{bmatrix} \beta_1' & 0\\ 0 & \beta_2' \end{bmatrix} B_1, \textrm{~and}
\end{align*}
require all these matrices to be full-rank. Note that the full-rank condition requires that all the coefficients $\alpha_1, \alpha_2, \beta_1, \beta_2, \alpha_1', \alpha_2', \beta_1'$ and $\beta_2'$ be non-zero and the matrices $A_1$ and $B_1$ to be full-rank. In particular, the required condition is equivalent to requiring that
\begin{align*}
{\begin{bmatrix} \alpha_1 & \alpha_2 \\ 0 & \alpha_2 \end{bmatrix}}^{-1} \begin{bmatrix} \beta_1 & 0\\ \beta_1 & \beta_2 \end{bmatrix} &= {\begin{bmatrix} \alpha_1' & 0 \\ 0 & \alpha_2' \end{bmatrix}}^{-1} \begin{bmatrix} \beta_1' & 0\\ 0 & \beta_2'\end{bmatrix}\\
\Rightarrow \frac{1}{\alpha_1 \alpha_2} \begin{bmatrix} \alpha_2 & -\alpha_2 \\ 0 & \alpha_1 \end{bmatrix} \begin{bmatrix} \beta_1 & 0\\ \beta_1 & \beta_2 \end{bmatrix} &= \begin{bmatrix} 1/\alpha_1' & 0 \\ 0 & 1/\alpha_2' \end{bmatrix} \begin{bmatrix} \beta_1' & 0\\ 0 & \beta_2'\end{bmatrix}\\
\Rightarrow \begin{bmatrix} 0 & -\frac{\beta_2}{\alpha_1} \\ \frac{\beta_1}{\alpha_2} & \frac{\beta_2}{\alpha_2}\end{bmatrix} &= \begin{bmatrix} \frac{\beta_1'}{\alpha_1'} & 0 \\ 0 & \frac{\beta_2'}{\alpha_2'}\end{bmatrix}
\end{align*}
For the above equality to hold, we definitely need $\beta_2 = 0$, but this would contradict the requirement that $\beta_2 \neq 0$ that is needed for the full rank condition.
\endproof

\begin{biography}[{\includegraphics[width=1in,
height=1.25in,clip,keepaspectratio]{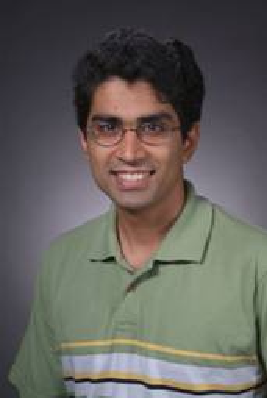}}]{Aditya
Ramamoorthy} (M'05) received the B.Tech. degree in electrical engineering
from the Indian Institute of Technology, Delhi, in 1999, and the M.S.
and Ph.D. degrees from the University of California, Los Angeles (UCLA), in
2002 and 2005, respectively.
He was a systems engineer with Biomorphic VLSI Inc. until 2001. From 2005
to 2006, he was with the Data Storage Signal Processing Group of Marvell
Semiconductor Inc. Since fall 2006, he has been with the Electrical and Computer Engineering Department at Iowa State University,
Ames, IA 50011, USA. His research interests are in the areas of network information theory,
channel coding and signal processing for storage devices, and its applications
to nanotechnology.

Dr. Ramamoorthy is the recipient of the 2012 Iowa State University's Early Career Engineering Faculty Research Award, the 2012 NSF CAREER award, and the Harpole-Pentair professorship in 2009 and 2010. He has been serving as an associate editor for the IEEE Transactions on Communications since 2011.
\end{biography}

\begin{biography}{Michael Langberg} (M'07) is an Associate Professor in the Mathematics and Computer Science department at the Open University of Israel. Previously, between 2003 and 2006, he was a postdoctoral scholar in the Computer Science and Electrical Engineering departments at the California Institute of Technology. He received his B.Sc. in mathematics and computer science from Tel-Aviv University in 1996, and his M.Sc. and Ph.D. in computer science from the Weizmann Institute of Science in 1998 and 2003 respectively. His research interests include information theory, combinatorics, and algorithm design.
\end{biography}

\end{document}